\newtheorem{theorem}{Theorem}
\newtheorem{proposition}{Proposition}
\newtheorem{lem}{Lemma}
\newtheorem{cor}{Corollary}
\newtheorem{remark}{Remark}
\newtheorem{example}{Example}
\newcommand{\ben}{\begin{equation*}}
\newcommand{\een}{\end{equation*}}
\newcommand{\comment}[1]{}
\renewcommand{\arraystretch}{0.7}
\begin{document}

\title{Binary optimal linear codes with various hull dimensions and entanglement-assisted QECC}

\author{Jon-Lark Kim\thanks{Department of Mathematics,
Sogang University,
Seoul 04107, South Korea.
{Email: \tt jlkim@sogang.ac.kr},
J.-L. Kim was supported by the National Research Foundation of Korea (NRF) grant funded by the Korea government (NRF-2019R1A2C1088676).
}
}

\date{}

\maketitle

\begin{abstract}
The hull of a linear code $C$ is the intersection of $C$ with its dual. To the best of our knowledge, there are very few constructions of binary linear codes with the hull dimension $\ge 2$ except for self-orthogonal codes.
We propose a building-up construction to obtain a plenty of binary  $[n+2, k+1]$ codes with hull dimension $\ell, \ell +1$, or $\ell +2$ from a given binary $[n,k]$ code with hull dimension $\ell$. In particular, with respect to hull dimensions 1 and 2, we construct all binary optimal $[n, k]$ codes of lengths up to 13. With respect to hull dimensions 3, 4, and 5, we construct all binary optimal $[n,k]$ codes of lengths up to 12 and the best possible minimum distances of $[13,k]$
codes for $3 \le k \le 10$. As an application, we apply our binary optimal codes with a given hull dimension to construct several entanglement-assisted quantum error-correcting codes(EAQECC) with the best known parameters.
\end{abstract}

\noindent
{\bf keywords} building-up construction, codes, hull, LCD codes

{\bf MSC(2010):} Primary 94B05

\section{Introduction}

The hull of a linear code $C$ is the intersection of $C$ with its dual.
The hull of a linear code was introduced by Assmus, Jr.
and Key~\cite{AssKey}. The hull determines
the complexity of algorithms for checking permutation equivalence of two linear codes~\cite{Sen2000}, which are very
effective if the dimension of the hull is small, and which are worst if the dimension of the hull is large.
The hardness of the Permutation Code Equivalence problem is of
great importance when designing cryptographic primitives, such as public-key
cryptosystems and identification schemes in the field of code-based cryptography~\cite{SenSim}.

When the hull contains only the zero vector, that is, the hull dimension is 0, $C$ is called a Linear Complementary Dual code (shortly, LCD code). Recently, LCD codes have been actively studied due to its side channel attack. An LCD code was originally constructed by Massey~\cite{Massey},~\cite{Massey92} as a reversible code in order to provide an optimum linear coding solution for the two-user binary adder channel. Carlet and Guilley~\cite{CarGui} introduced several constructions of LCD codes and investigated an application of LCD codes against side-channel attacks(SCA) and Fault Injection Attack(FIA).


There are several constructions for binary linear codes with hull dimensions 0 or 1. More precisely,  Galvez et al.~\cite{Gal} have constructed all binary optimal LCD $[n,k]$ codes for $1 \le k \le n \le 12.$ Harada and Saito~\cite{HarSai} have extended this for $1 \le k \le n \le 16$. Li and Zeng~\cite{LiZeng} have constructed binary linear $[n, k]$ codes with hull dimension 1 for $n=8$ with $k=3,5,7$, $n=9$ with $k=3,5,6,7$, and $n=10$ with $k=3,4,7$, whose optimality was not discussed.

On the other hand, when the dimension $h$ of the hull of a linear $[n,k]$ code $C$ is equal to $k$, $C$ is called self-orthogonal, and self-dual if $h=k=n/2$. Self-orthogonal or self-dual codes have been one of the most active research areas in classical coding theory~\cite{RaiSlo} and recently in quantum coding theory~\cite{CRSS}.
As far as we know, there are few constructions of binary linear codes with the hull dimension $h \ge 2$ except for self-orthogonal codes. It turns out that linear codes with various hull dimensions can be used to construct entanglement-assisted quantum error-correcting codes (EAQECC)~\cite{GalHer}, \cite{GueJitGul}, \cite{SokQia}.

Therefore, it is an interesting problem to find a unified method to construct linear codes with various hull dimensions.



In this paper, we give an efficient and systematic method, called a building-up construction to construct linear codes with various hull dimensions from a given linear code with a fixed hull dimension. More precisely, with respect to hull dimensions 1 and 2, we construct all binary optimal $[n, k]$ codes of lengths up to 13. With respect to hull dimensions 3, 4, and 5, we construct all binary optimal $[n,k]$ codes of lengths up to 12 and the best possible minimum distances of $[13,k]$ codes for $3 \le k \le 10$.
As a coding theoretical application, given length $2 \le n \le 12$ and dimension $k$ $(2 \le k \le n)$,  by running all values of the hull dimension $h$ $(0 \le h \le [n/2])$, we can recover all the binary best known linear $[n,k]$ codes in Grassl's table~\cite{Gra} with sometimes more than one inequivalent code. We apply our binary optimal codes with a given hull dimension to the construction of $[[n,k,d; c]]$ EAQECC with the best known parameters as described in~\cite{GalHer},~\cite{SokQia}.

\medskip

\section{Preliminaries}

A {\em linear $[n,k,d]$ code} $C$ over $GF(q)$ or $\mathbb F_q$ is a $k$-dimensional subspace of $\mathbb F_2^n$ with minimum distance $d(C)$ or $d$ if there is no confusion. The {\em dual} of $C$ is $C^{\perp}=\{{\bf x} \in \mathbb F_2^n~|~ {\bf x} \cdot {\bf c} = 0 {\mbox{ for any }} {\bf c} \in C\}$, where the dot product is the usual inner product. A linear code $C$ is called {\em self-orthogonal} if $C \subset C^{\perp}$ and {\em self-dual} if $C=C^{\perp}$. A linear code $C$ is called an {\em LCD code} (linear complementary dual code) if $C \cap C^{\perp} = \{0 \}$. Hence being LCD is the opposite concept of self-orthogonality.

Let $C$ be a linear code over $GF(q)$ with its dual $C^{\perp}$. The {\em Hull} of $C$ is defined as Hull$(C)= C \cap C^{\perp}$. Let $h={\mbox{dimension of Hull}}(C)$.

We call $C$ {\em $h_i$-optimal} if $d(C)$ is the largest among all the linear $[n,k]$ codes $C$ with $h=i$ for $0 \le i \le k$. We call $C$ {\em optimal} if $d(C)$ is the largest among all the linear $[n,k]$ codes $C$.

\begin{lem}\label{lem_rank}{\rm{(\cite[Proposition 1]{LiZeng})}}
Let $C$ be a linear $[n, k]$ code over $GF(q)$ with generator matrix $G$. Then
$h = k - {\mbox{rank}} (GG^T)$.
\end{lem}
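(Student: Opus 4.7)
The plan is to reduce the problem to computing the nullity of the $k \times k$ Gram matrix $GG^T$, using the fact that $G$ gives a linear isomorphism between $\mathbb{F}_q^k$ and $C$.

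First I would parametrize the codewords of $C$ as $\mathbf{c} = \mathbf{x}G$ for $\mathbf{x} \in \mathbb{F}_q^k$. Since $G$ has rank $k$, the map $\mathbf{x} \mapsto \mathbf{x}G$ is an $\mathbb{F}_q$-linear bijection, so dimensions are preserved on preimages. The key observation is then the characterization $\mathbf{c} \in C^\perp$ iff $\mathbf{c} \cdot \mathbf{c}' = 0$ for every $\mathbf{c}' \in C$, equivalently $\mathbf{c} G^T = \mathbf{0}$, since the rows of $G$ generate $C$ and the dot product against every codeword can be checked on a generating set. Substituting $\mathbf{c} = \mathbf{x}G$ gives the condition $\mathbf{x}(GG^T) = \mathbf{0}$.

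Next I would identify the Hull with the image of a kernel: since $\mathbf{x}G$ automatically lies in $C$, membership in $\text{Hull}(C) = C \cap C^\perp$ for $\mathbf{c} = \mathbf{x}G$ is equivalent solely to the orthogonality condition $\mathbf{x}(GG^T) = \mathbf{0}$. Therefore the bijection $\mathbf{x} \mapsto \mathbf{x}G$ restricts to a bijection between the left null space $N = \{\mathbf{x} \in \mathbb{F}_q^k : \mathbf{x}(GG^T) = \mathbf{0}\}$ and $\text{Hull}(C)$, so $h = \dim N$.

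Finally I would apply the rank-nullity theorem to the linear map $\mathbb{F}_q^k \to \mathbb{F}_q^k$ defined by $\mathbf{x} \mapsto \mathbf{x}(GG^T)$, yielding $\dim N = k - \text{rank}(GG^T)$ and hence $h = k - \text{rank}(GG^T)$. There is no real obstacle here; the only place requiring a moment of care is justifying that $\mathbf{c} \in C^\perp \Leftrightarrow \mathbf{c} G^T = \mathbf{0}$, which follows because orthogonality to all of $C$ is equivalent to orthogonality to the rows of $G$. The result is essentially a clean bookkeeping exercise once this reduction is in place.
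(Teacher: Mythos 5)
Your argument is correct and complete: identifying $\mathrm{Hull}(C)$ with the image under the injective map $\mathbf{x}\mapsto\mathbf{x}G$ of the left null space of $GG^T$, and then applying rank--nullity, is exactly the standard proof of this fact. The paper itself offers no proof to compare against --- it simply cites Proposition 1 of Li and Zeng --- so your write-up fills in what the paper takes as a black box, and it does so along the same lines as the cited source; the one step you flag as needing care (that $\mathbf{c}\in C^{\perp}$ is equivalent to $\mathbf{c}G^{T}=\mathbf{0}$ because orthogonality need only be checked on the rows of $G$) is indeed the only point requiring justification, and you justify it correctly.
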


Hence, if $h=0$, that is, Hull$(C)= \{0\}$ or ${\mbox{rank}} (GG^T)=k$, then $C$ is {\em LCD}. If Hull$(C)=C$, then $C$ is self-orthogonal.

\medskip

Now we also describe entanglement-assisted quantum error-correcting codes(EAQECC).
An {\em EAQECC with parameters $[[n, k, d; c]]$} encodes $k$
logical qubits into $n$ physical qubits with the help of $c$ pre-shared entanglement pairs~\cite{QiaZha}. If $c=0$, then
$[[n, k, d; c]]$ EAQECC are equivalent to quantum stabilizer codes. Hence, $[[n, k, d; c]]$ EAQECC are a generalization of $[[n, k, d]]$ QECC.

The following is a useful method to construct $[[n, k, d; c]]_q$ EAQECC from $q$-ary linear $[n, k, d]$ codes.

\begin{proposition} {\rm (\cite{GalHer},~\cite[Corollary 3.1]{GueJitGul},~\cite[Proposition 8]{SokQia})} \label{prop-EAQECC}
 Let $C$ be a linear code over $GF(q)$ with parameters $[n,k,d]_q$ and $C^{\perp}$ be its
dual with parameters $[n,n-k,d']_q$. Let $\dim({\rm{Hull}}(C))=h$. Then, there exist
an $[[n,k-h,d; n-k-h]]_q$ EAQECC and an $[[n,n-k-h, d';k-h]]_q$ EAQECC.
\end{proposition}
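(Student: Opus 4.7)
The plan is to apply the Brun--Devetak--Hsieh style construction of EAQECC from a classical linear code (already proved in the cited sources) and then reduce the parameters of the resulting EAQECC to the hull dimension $h$ using Lemma~\ref{lem_rank}. The only genuinely nontrivial ingredient, the existence of an EAQECC with parameters $[[n,\,2k-n+c,\,d;\,c]]_q$ from an $[n,k,d]_q$ classical code with parity-check matrix $H$ and $c=\mathrm{rank}(HH^T)$ pre-shared ebits, is invoked directly from the references; the work left is a parameter calculation.

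First I would fix a parity-check matrix $H$ of $C$, which is a generator matrix of $C^{\perp}$. Applying Lemma~\ref{lem_rank} to $C^{\perp}$, we get
\begin{equation*}
\mathrm{rank}(HH^T)=(n-k)-\dim\mathrm{Hull}(C^{\perp}).
\end{equation*}
Since $\mathrm{Hull}(C^{\perp})=C^{\perp}\cap (C^{\perp})^{\perp}=C^{\perp}\cap C=\mathrm{Hull}(C)$, the hull dimension of $C^{\perp}$ equals $h$, so $c:=\mathrm{rank}(HH^T)=n-k-h$. Substituting into the general EAQECC parameter $[[n,\,2k-n+c,\,d;\,c]]_q$ gives $2k-n+c=k-h$, yielding the first code $[[n,\,k-h,\,d;\,n-k-h]]_q$.

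For the second EAQECC, I would repeat the construction starting from $C^{\perp}$, which is an $[n,\,n-k,\,d']_q$ classical code. The relevant matrix is now a generator matrix of $(C^{\perp})^{\perp}=C$, say $G$, and Lemma~\ref{lem_rank} applied to $C$ gives $\mathrm{rank}(GG^T)=k-h$, so the number of ebits this time is $c'=k-h$. The logical dimension becomes $2(n-k)-n+c'=n-k-h$, producing the claimed $[[n,\,n-k-h,\,d';\,k-h]]_q$ EAQECC.

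The main obstacle is not in the algebra but in the choice of black box: one must make sure that the variant of the classical-to-EAQECC construction used in \cite{GalHer,GueJitGul,SokQia} does output exactly $\mathrm{rank}(HH^T)$ ebits (rather than, say, half that count as occurs when the symplectic inner product is used in characteristic $2$). After verifying this normalization, the proposition reduces to the two one-line computations above via Lemma~\ref{lem_rank} together with the identity $\mathrm{Hull}(C)=\mathrm{Hull}(C^{\perp})$.
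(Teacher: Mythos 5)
The paper states this proposition as a known result imported from the cited references and gives no proof of its own, so there is no internal argument to compare against; your derivation is correct and is precisely the standard argument underlying those references. Your parameter bookkeeping --- applying Lemma~\ref{lem_rank} to $C^{\perp}$ and to $C$, together with $\mathrm{Hull}(C)=\mathrm{Hull}(C^{\perp})$, to get $\mathrm{rank}(HH^T)=n-k-h$ and $\mathrm{rank}(GG^T)=k-h$ and hence the stated parameters --- is sound, and the normalization $c=\mathrm{rank}(HH^T)$ is indeed the one used in the Euclidean-case constructions of \cite{GalHer}, \cite{GueJitGul}, and \cite{SokQia}.
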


\begin{proposition}[\cite{GueJitGul}] \label{prop-quan-MDS}
An $[[n,k,d;c]]_q$ EAQECC satisfies
\[
n+c-k \ge 2(d-1),
\]
where $0 \le c \le n-1$.
\end{proposition}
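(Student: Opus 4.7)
The plan is to reduce the EA-quantum Singleton bound to the ordinary quantum Singleton bound for stabilizer codes, namely $N - K \ge 2(d-1)$ for $[[N,K,d]]_q$ QECCs. The idea is to treat the $c$ pre-shared ebits as additional noiseless qudits on Bob's side, producing an augmented code on $n+c$ qudits to which the standard bound directly applies.

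First, I would invoke the stabilizer framework for EAQECCs developed by Brun--Devetak--Hsieh: an $[[n,k,d;c]]_q$ EAQECC is determined by $n-k$ symplectic generators on Alice's side, of which $c$ pairs fail to commute and are completed to a commuting family by adjoining suitable operators on Bob's $c$ ebit halves. Including Bob's halves in the physical register turns the EAQECC into a genuine stabilizer code of length $n+c$ encoding $k$ logical qudits. Second, I would verify that this augmented code has minimum distance at least $d$: every Pauli error of weight $\le d-1$ on Alice's side lifts to an $(n+c)$-qudit error of the same weight, so correctability of Alice-side errors of weight $\le d-1$ forces the augmented distance to be at least $d$; errors supported purely on Bob's side, by construction of the augmentation, commute with all stabilizers and act trivially on the code subspace. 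Third, applying the standard quantum Singleton bound to the $[[n+c,k,d]]_q$ stabilizer code yields $(n+c) - k \ge 2(d-1)$ immediately.

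The main obstacle I anticipate is the distance-preservation step, because the EAQECC model forbids errors on Bob's side whereas the standard Singleton bound is stated for errors on any coordinate; careful bookkeeping of Bob-side Pauli operators via the symplectic structure of the augmented stabilizer is required. An alternative route that bypasses this subtlety is a direct information-theoretic argument: purify the encoding into a reference system $R$ of dimension $q^k$, apply the Knill--Laflamme erasure condition to obtain $S(RT) = S(R) + S(T)$ for every Alice-side subset $T$ with $|T| = d-1$, then combine this with the purity of the global state on $R A B_A B_B$ and a doubling trick using two disjoint erasure sets of size $d-1$ to reproduce the factor of $2$ in the bound. Finally, the ancillary condition $c \le n-1$ reflects the requirement that the code encode at least one logical qudit nontrivially, so the number of ebits cannot exhaust all of Alice's qudits.
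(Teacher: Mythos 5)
First, note that the paper does not prove this proposition at all: it is quoted from \cite{GueJitGul} as a known result, so there is no ``paper proof'' to match your argument against. Your attempt must therefore stand on its own, and it has a genuine gap, one that you yourself half-identify and then wave away. The reduction hinges on the claim that the augmented $(n+c)$-qudit stabilizer code has minimum distance at least $d$. The distance of an $[[n,k,d;c]]_q$ EAQECC only quantifies protection against errors on Alice's $n$ qudits; Bob's $c$ ebit halves are assumed noiseless by the model. The induced stabilizer code on $n+c$ qudits, however, must be measured against errors on \emph{all} $n+c$ coordinates, and an element of $N(S)\setminus S$ of the augmented code may have support on Bob's qudits and weight strictly less than $d$. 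Your specific justification --- that Bob-side Pauli errors ``commute with all stabilizers and act trivially on the code subspace'' --- is false: the augmentation appends nontrivial Pauli operators to Bob's qudits precisely in order to restore commutativity, so a single-qudit error on Bob's side generically anticommutes with some extended generator and acts nontrivially. Hence the naive reduction only yields $(n+c)-k\ge 2(d'-1)$ for the (possibly smaller) distance $d'$ of the augmented code, which is weaker than the claim.

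This is not a cosmetic bookkeeping issue. It is known (Grassl, and subsequently Grassl--Huber--Winter) that the unqualified bound $n+c-k\ge 2(d-1)$ can actually be violated by EAQECCs with large $d$ relative to $n$; the bound is only valid in a restricted parameter regime (roughly $d\le(n+2)/2$), and the correct general statement is more involved. So no argument can close the gap in your first route without adding a hypothesis on $d$. Your alternative information-theoretic sketch (purification, the erasure condition $S(RT)=S(R)+S(T)$ for Alice-side sets $T$ of size $d-1$, and a two-set subadditivity argument) is essentially the route by which the bound is rigorously established in its regime of validity, and is the one you should develop; but as written it is only a sketch, and the ``doubling trick'' is exactly where the restriction on $d$ enters. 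Finally, the condition $c\le n-1$ is a standing constraint on the parameters, not something that follows from requiring $k\ge 1$ as you assert.
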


An EAQECC attaining this Singleton bound is called an {\em MDS EAQECC}. Chen et al.~\cite{CheZhuKai} constructed
MDS EAQECC when $q=2^e$ with $e$ odd with special values of $n, k, d,$ and $c$.

Let $q=2$ and consider the binary Hamming $[7,4,3]$ code $\mathcal H_3$. Since its dual $\mathcal H_3^{\perp}$ is the simplex code $\mathcal S_3$ and is a subcode of $\mathcal H_3$, we have $h(\mathcal H_3)=3.$ Hence by Proposition~\ref{prop-EAQECC}, we obtain a $[[7, 1, 3; 0]]_2$ EAQECC which is best known by Grassl's table~\cite{Gra}. Note that $n+c-k=7+0-1=6$ and $2(d-1)=4$. Hence $[[7, 1, 3; 0]]_2$ EAQECC is not MDS.

In this paper, we consider $q=2$ and construct various
$[[n,k-h,d; n-k-h]]_2$ EAQECC using the building-up constructions.

\section{Building-up construction for linear codes with various hull dimensions}

In the remaining sections, we consider binary codes.
We can construct $[n+2, k+1]$ linear codes with hull dimension $\ell +1$ from a given $[n, k]$ linear code with hull dimension $\ell$ as follows.

\begin{theorem} \label{build_2}
Let $C$ be a binary linear $[n,k]$ code. Suppose that the dimension of Hull$(C)$ is $\ell$, where $0 \le \ell \le k$. Let $G$ be a generator matrix for $C$ and $H$ a parity check matrix for $C$.

 Suppose that ${\bf x}=(x_1, x_2, \dots, x_n) \in GF(2)^n$ satisfies ${\bf x} \cdot {\bf x} =1$.
Let $y_i = {\bf x} \cdot {\bf r}_i$ for $1 \le i \le k$ where ${\bf r}_i$ is the $i$th row of $G$ and
$z_j = {\bf x} \cdot {\bf s}_j$ for $1 \le j \le n-k$ where ${\bf s}_j$ is the $j$th row of $H$.
Then
\begin{enumerate}

\item[(a)]

the following matrix

\[
G_1 = \left[
\begin{array}{cc|ccc}
1   & 0   & x_1 & \dots & x_n \\
\hline
y_1 & y_1 &     &   {\bf r}_1 &     \\
y_{2} & y_{2} &  & {\bf r}_2  &   \\
\vdots & \vdots & & \vdots    &   \\
y_k & y_k &     &  {\bf r}_k  &     \\
\end{array}
\right]
\]
generates an $[n+2, k+1]$ linear code $C_1$ with $h(C_1)=\ell +1$. This is called {\bf Construction I}.

\item[(b)]
A parity check matrix $H_1$ for $C_1$ is given by
\[
H_1 = \left[
\begin{array}{cc|ccc}
1   & 0   & x_1 & \dots & x_n \\
\hline
z_1 & z_1 &     & {\bf s}_1    &     \\
z_{2} & z_{2} &  &  {\bf s}_2 &   \\
\vdots & \vdots &       & \vdots &   \\
z_{n-k} & z_{n-k} &     &  {\bf s}_{n-k}  &     \\
\end{array}
\right].
\]
\end{enumerate}

\end{theorem}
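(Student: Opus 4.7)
The plan is to prove part (a) in three stages: first establishing that $C_1$ has the claimed parameters $[n+2,k+1]$, second computing $G_1 G_1^T$ and invoking Lemma~\ref{lem_rank} to pin down the hull dimension, and third verifying (b) by checking $G_1 H_1^T = 0$ together with a rank count. The main leverage throughout is the hypothesis $\mathbf{x}\cdot\mathbf{x}=1$, which is what causes a crucial cancellation in the top-left $2\times 2$ block.

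For the dimension count, I would show that $G_1$ has rank $k+1$. Subtracting $y_i$ times the first row of $G_1$ from each subsequent row (working in characteristic $2$) replaces row $i+1$ by $(0,\, y_i,\, \mathbf{r}_i + y_i\mathbf{x})$. Since the $\mathbf{r}_i$ are linearly independent in $\mathbb{F}_2^n$, so are the vectors $\mathbf{r}_i + y_i\mathbf{x}$, and they are independent of the first row of $G_1$ because that row is $(1,0,\mathbf{x})$ whereas the reduced rows have a $0$ in the first coordinate. Hence $\operatorname{rank}(G_1)=k+1$, so $C_1$ is an $[n+2,k+1]$ code.

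The central computation is that of $G_1 G_1^T$. The $(1,1)$ entry is $1\cdot 1 + 0\cdot 0 + \mathbf{x}\cdot\mathbf{x} = 1+1 = 0$ thanks to the assumption on $\mathbf{x}$. For $i\ge 1$ the $(1,i+1)$ entry is $1\cdot y_i + 0\cdot y_i + \mathbf{x}\cdot\mathbf{r}_i = y_i + y_i = 0$ by the definition of $y_i$, and symmetrically for $(i+1,1)$. For $i,j\ge 1$ the $(i+1,j+1)$ entry is $y_iy_j + y_iy_j + \mathbf{r}_i\cdot\mathbf{r}_j = \mathbf{r}_i\cdot\mathbf{r}_j$. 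Therefore
\[
G_1 G_1^T \;=\; \begin{bmatrix} 0 & \mathbf{0} \\ \mathbf{0} & GG^T \end{bmatrix},
\]
so $\operatorname{rank}(G_1 G_1^T) = \operatorname{rank}(GG^T) = k - \ell$ by Lemma~\ref{lem_rank} applied to $C$. Applying the same lemma to $C_1$ gives $h(C_1) = (k+1) - (k-\ell) = \ell + 1$, as claimed.

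For part (b), I would verify $G_1 H_1^T = 0$ block-by-block in the same spirit: the top-top entry is $1+0+\mathbf{x}\cdot\mathbf{x}=0$, the mixed entries vanish because $z_j = \mathbf{x}\cdot\mathbf{s}_j$ and $y_i=\mathbf{x}\cdot\mathbf{r}_i$, and the bottom-bottom block reduces to $\mathbf{r}_i\cdot\mathbf{s}_j = 0$ since $H$ is a parity check matrix for $C$. The same elementary row-reduction argument used for $G_1$ shows $\operatorname{rank}(H_1)=n-k+1$, which matches the codimension $n+2-(k+1)$ of $C_1$, so $H_1$ is indeed a parity check matrix. I do not anticipate any genuine obstacle; the only subtle point is being careful that the $\mathbf{x}\cdot\mathbf{x}=1$ hypothesis is exactly what forces the $(1,1)$ entries of both $G_1 G_1^T$ and $G_1 H_1^T$ to vanish, which is ultimately the reason the hull dimension increases by exactly one rather than by zero or two.
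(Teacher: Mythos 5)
Your proof follows essentially the same route as the paper's: the heart of both arguments is the block computation
\[
G_1G_1^T=\begin{pmatrix}0&\mathbf{0}\\ \mathbf{0}&GG^T\end{pmatrix},
\]
driven by $\mathbf{x}\cdot\mathbf{x}=1$ and $y_i=\mathbf{x}\cdot\mathbf{r}_i$, followed by Lemma~\ref{lem_rank}; part (b) is likewise handled in both by checking $G_1H_1^T=0$ and counting the rank of $H_1$. The only material addition you make is an explicit verification that $\operatorname{rank}(G_1)=k+1$, which the paper leaves implicit (though it is genuinely needed before Lemma~\ref{lem_rank} can be applied to $C_1$).

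That added step, however, contains one false intermediate claim. After your row reduction the rows become $(0,\,y_i,\,\mathbf{r}_i+y_i\mathbf{x})$, and you assert that the vectors $\mathbf{r}_i+y_i\mathbf{x}$ are linearly independent in $\mathbb{F}_2^n$ because the $\mathbf{r}_i$ are. This fails precisely when $\mathbf{x}\in C$: if $\mathbf{x}=\sum_{i\in S}\mathbf{r}_i$, then $\sum_{i\in S}y_i=\mathbf{x}\cdot\mathbf{x}=1$, hence $\sum_{i\in S}(\mathbf{r}_i+y_i\mathbf{x})=\mathbf{x}+\mathbf{x}=\mathbf{0}$ (already for $n=k=1$, $G=[1]$, $\mathbf{x}=(1)$ one gets $\mathbf{r}_1+y_1\mathbf{x}=\mathbf{0}$). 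The conclusion $\operatorname{rank}(G_1)=k+1$ is nevertheless correct, because the reduced rows retain $y_i$ in the second coordinate: a vanishing combination $\sum_{i\in S}(0,\,y_i,\,\mathbf{r}_i+y_i\mathbf{x})=\mathbf{0}$ forces $\sum_{i\in S}y_i=0$ and therefore $\sum_{i\in S}\mathbf{r}_i=\mathbf{0}$, so $S=\emptyset$; independence from the top row then follows from its leading $1$. The same repair is needed for your rank count of $H_1$, where the analogous claim fails when $\mathbf{x}\in C^{\perp}$. Everything else --- the entrywise computation of $G_1G_1^T$, the application of Lemma~\ref{lem_rank} to get $h(C_1)=\ell+1$, and the block verification of $G_1H_1^T=0$ --- is correct and matches the paper's argument.
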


\begin{proof}
We prove (a). Now rank$(G_1G_1^T)$ is computed as follows.

\[
G_1G_1^T = \left[
\begin{array}{cccc}
0 & 0 & \dots & 0 \\
0 &   &       &  \\
\vdots &  &  GG^T \\
0 &    &  &     \\
\end{array}
\right].
\]

Thus rank$(G_1G_1^T) = {\mbox{rank}}(GG^T) =k - h(C)=k-\ell$ since $h(C)=\ell$. Now $h(C_1)=(k+1)- {\mbox{rank}}(G_1G_1^T)=(k+1)-(k -\ell)=\ell + 1$ as desired.

\medskip

We prove (b) as follows.
Notice that
\setlength{\tabcolsep}{4pt}
\[
G_1 H_1^T =
\left[
\begin{array}{cccc}
0  & 0 & \dots & 0 \\
0  &   &       &   \\
\vdots &    & GH^T \\
0  &   &    &      \\
\end{array}
\right] = 0.
\]
Since the top row of $H_1$ cannot be a linear combination of the remaining rows of $H_1$, the dimension of the row space of $H_1$ is $1+(n-k)=(n+2)-(k+1)$ which is the dimension of $(C_1^{\perp})$. Thus $H_1$ is a generator matrix for $C_1^{\perp}$.
\end{proof}

\medskip

\begin{theorem} \label{build_2_2}
Let $C$ be a binary linear $[n,k]$ code. Suppose that the dimension of Hull$(C)$ is $\ell$, where $0 \le \ell \le k$. Let $G$ be a generator matrix for $C$ and $H$ a parity check matrix for $C$.

Suppose that ${\bf x}=(x_1, x_2, \dots, x_n) \in GF(2)^n$ satisfies ${\bf x} \cdot {\bf x} =0$.
Let $y_i = {\bf x} \cdot {\bf r}_i$ for $1 \le i \le k$ where ${\bf r}_i$ is the $i$th row of $G$ and
$z_j = {\bf x} \cdot {\bf s}_j$ for $1 \le j \le n-k$ where ${\bf s}_j$ is the $j$th row of $H$.
Then
\begin{enumerate}

\item[(a)]

the following matrix
\[
G_2 = \left[
\begin{array}{cc|ccc}
1   & 1   & x_1 & \dots & x_n \\
\hline
y_1 & 0 &     &  {\bf r}_1  &     \\
y_{2} & 0 &    &  {\bf r}_2   &  \\
\vdots & \vdots & &  \vdots   &   \\
y_k & 0 &     &  {\bf r}_k  &     \\
\end{array}
\right]
\]
generates an $[n+2, k+1]$ linear code $C_2$ with $h(C_2)=\ell$, $\ell +1$, or $\ell +2$. More precisely, we characterize them as follows.

\begin{itemize}

\item
If $y_i=0$ for any $1 \le i \le k$, then the dimension of $Hull(C_2)$ is $\ell +1$. This is called {\bf Construction II}.

\item Suppose $y_i \neq 0$ for some $1 \le i \le k$. So, $G_2$ can be rewritten as $G_2'$ given by
    \[
G_2' = \left[
\begin{array}{cc|ccc}
1   & 1   & x_1 & \dots & x_n \\
\hline
1 & 0 &     &    &     \\
0 & 0 &     &    &     \\
\vdots & \vdots & & G'     &   \\
0 & 0 &    &   \\
\end{array}
\right],
\]
where $\left< G' \right>=\left< G \right>$.
Then the dimension of $Hull(C_2)$ is $\ell$, $\ell +1$, or $\ell +2$. This is called {\bf Construction III}.
\end{itemize}

\item[(b)]
A parity check matrix $H_2$ for $C_2$ is given by
\[
H_2 = \left[
\begin{array}{cc|ccc}
1   & 1   & x_1 & \dots & x_n \\
\hline
0 & z_1 &     &   {\bf s}_1 &     \\
0 & z_{2} &   &  {\bf s}_2 &   \\
\vdots & \vdots &       & \vdots &   \\
0 & z_{n-k} &     &  {\bf s}_{n-k}  &     \\
\end{array}
\right].
\]

\end{enumerate}

\end{theorem}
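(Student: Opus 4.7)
The plan is to mirror the proof of Theorem~\ref{build_2}: compute the Gram matrix $G_2 G_2^T$, read off its rank, and then apply Lemma~\ref{lem_rank}. First I would record the three identities we will use repeatedly: $\mathbf{x}\cdot\mathbf{x}=0$ (by hypothesis), $\mathbf{x}\cdot\mathbf{r}_i=y_i$, and $\mathbf{x}\cdot\mathbf{s}_j=z_j$. The top row of $G_2$ has squared norm $1+1+\mathbf{x}\cdot\mathbf{x}=0$, and its inner product with the $(i+1)$-st row is $y_i\cdot 1+0\cdot 1+\mathbf{x}\cdot\mathbf{r}_i = 2y_i=0$. Hence the first row and column of $G_2 G_2^T$ vanish, while the lower $k\times k$ block has $(i,j)$-entry $y_iy_j+\mathbf{r}_i\cdot\mathbf{r}_j$, so this block equals $GG^T+\mathbf{y}\mathbf{y}^T$ with $\mathbf{y}=(y_1,\dots,y_k)^T$. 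A short check on the first two coordinates shows that the $k+1$ rows of $G_2$ are linearly independent, so $C_2$ is an $[n+2,k+1]$ code, and Lemma~\ref{lem_rank} applies.

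For Construction II, where every $y_i=0$, the rank-one perturbation disappears, so $\text{rank}(G_2G_2^T)=\text{rank}(GG^T)=k-\ell$ and Lemma~\ref{lem_rank} gives $h(C_2)=(k+1)-(k-\ell)=\ell+1$, as claimed. For Construction III I would first justify the rewriting of $G_2$ as $G_2'$: after permuting the rows of $G$ so that $y_1=1$, for each $i\ge 2$ with $y_i=1$ I subtract row $2$ of $G_2$ from row $i+1$; this replaces $\mathbf{r}_i$ by $\mathbf{r}_i+\mathbf{r}_1$, leaves the code $\langle G\rangle$ unchanged, and zeroes out $y_2,\dots,y_k$ while fixing $y_1=1$. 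The resulting matrix has the form $G_2'$ displayed in the statement. Computing $G_2'(G_2')^T$ in the same way, its lower block becomes $G'(G')^T+E_{11}$, where $E_{11}$ is the rank-one matrix with a single $1$ in position $(1,1)$.

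Since $\langle G'\rangle=\langle G\rangle$, we have $\text{rank}(G'(G')^T)=k-\ell$. The crux of the proof is then the standard fact that adding a rank-one matrix to a given matrix changes its rank by at most $1$ in either direction, so
\[
\text{rank}\bigl(G_2'(G_2')^T\bigr)\in\{k-\ell-1,\;k-\ell,\;k-\ell+1\},
\]
which by Lemma~\ref{lem_rank} yields $h(C_2)\in\{\ell,\ell+1,\ell+2\}$, establishing (a). Part (b) is a direct verification: the entries of $G_2H_2^T$ split into four blocks, and each one vanishes by $\mathbf{x}\cdot\mathbf{x}=0$, $\mathbf{x}\cdot\mathbf{r}_i=y_i$, $\mathbf{x}\cdot\mathbf{s}_j=z_j$, or $GH^T=0$; to conclude that $H_2$ is a parity check matrix one notes that its first row has a $1$ in position one while every other row has a $0$ there, so the top row is independent of the rest and $\text{rank}(H_2)=1+(n-k)=(n+2)-(k+1)$.

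The step that requires the most care is the case analysis in Construction III, where one must identify the three-valued behavior of $h(C_2)$ with the rank fluctuation of a symmetric rank-one perturbation; the preliminary reduction to $G_2'$ is what makes this perturbation explicit and keeps the argument clean. Everything else is bookkeeping modelled on the proof of Theorem~\ref{build_2}.
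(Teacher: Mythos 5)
Your proposal is correct and follows essentially the same route as the paper: compute the Gram matrix $G_2G_2^T$, reduce Construction III to the rank-one perturbation $G'(G')^T+E_{11}$ via the row reduction to $G_2'$, invoke the fact that a rank-one perturbation shifts the rank by at most one, and apply Lemma~\ref{lem_rank}; part (b) is the same block-by-block verification. Your write-up is in fact slightly more careful than the paper's in two spots --- you justify the row operations producing $G_2'$ explicitly and you record the general form $GG^T+\mathbf{y}\mathbf{y}^T$ of the lower block --- but these are refinements of the same argument, not a different one.
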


\begin{proof}

We prove (a).
\begin{itemize}

\item
Suppose that $y_i= 0$ for any $1 \le i \le k$. Then
we have

\[
G_2G_2^T = \left[
\begin{array}{cccc}
0 & 0 & \dots & 0 \\
0 &   &       &  \\
\vdots &  &  GG^T \\
0 &    &  &     \\
\end{array}
\right].
\]

Thus rank$(G_2G_2^T) = {\mbox{rank}}(GG^T) =k - \ell$ since $h(C)= \ell$. The dimension of Hull$(C_2) = h(C_2)=(k+1)- {\mbox{rank}}(G_2G_2^T)=(k+1)-(k-\ell)=\ell +1$.

\item

Suppose $y_i \neq 0$ for some $1 \le i \le k$. By row operations of $G_2$, $G_2$ is transformed into $G_2'$ given by
    \[
G_2' = \left[
\begin{array}{cc|ccc}
1   & 1   & x_1 & \dots & x_n \\
\hline
1 & 0 &     &    &     \\
0 & 0 &     &    &     \\
\vdots & \vdots & & G'     &   \\
0 & 0 &    &   \\
\end{array}
\right],
\]
where $\left< G' \right>=\left< G \right>$. Furthermore,

\[
G_2'(G_2')^T =
\left[
\begin{array}{c|ccc}
0   & 0   & \dots & 0 \\
\hline
0 &  &     &         \\
\vdots & &  X     &   \\
0 &  &    &   \\
\end{array}
\right],
\]
where $X=(1 0 \dots 0)^T(1 0 \dots 0) + G'(G')^T$.
Since rank$(A+B) \le$ rank$(A)$ + rank$(B)$, we have
rank$(X)$ $\le$ rank$((1 0 \dots 0)^T(1 0 \dots 0))$ + rank$(G'(G')^T)$ $=1 +$ rank$(G'(G')^T)$.
Noting that $(1 0 \dots 0)^T(1 0 \dots 0)$ affects only the top row of $G'(G')^T$, we know that rank$(X)$ decreases by at most one. So, rank$(X)$ is rank$(G'(G')^T)$, rank$(G'(G')^T) -1$, or rank$(G'(G')^T) +1$.

Thus rank$(G_2 (G_2)^T)$=rank$(G_2' (G_2')^T)$ = rank$(X)$ is  rank$(G'(G')^T)$ = $k -\ell$, rank$(G'(G')^T) -1= k -\ell -1$, or rank$(G'(G')^T) +1= k -\ell +1$. Hence the dimension of $Hull(C_2)$ is $k+1 -(k -\ell)=\ell +1$, $k+1 -(k - \ell -1)=\ell +2$, or $k+1 -(k - \ell +1)=\ell$.
\end{itemize}

We prove (b). It is straightforward to check that $G_2H_2^T = 0$ by the definition of $y_i$'s and $z_j$'s. Because the rank of $H_2$ is $(n-k) +1$ and the dimension of the dual of $C_2$ is $(n+2)-(k+1)=n-k+1$, we see that $H_2$ is a parity check matrix for $C_2$.
\end{proof}

We note that Constructions II and III are basically the same construction but we distinguish them in order to guess the hull dimension of the built-up code.
We remark that Theorem~\ref{build_2} reproves the original building-up construction of binary self-dual codes~\cite{Kim01} where $n$ is even and $k=n/2=\ell$.

\medskip

\medskip

Harada~\cite{Har2021} gave a construction of binary LCD $[n+2, k+1]$ codes from a given binary LCD $[n, k]$ code. We generalize this in the following theorem.

By modifying the proof of Theorem~\ref{build_2}, we can construct $[n+2, k+1]$ linear codes with the same hull dimension as that of a given $[n, k]$ linear code.

\begin{theorem}\label{build-k}
Let $C$ be a binary linear $[n,k]$ code. Suppose that the dimension of Hull$(C)$ is $\ell$, where $0 \le \ell \le k$. Let $G$ be a generator matrix for $C$ and $H$ a parity check matrix for $C$.

Suppose that ${\bf x}=(x_1, x_2, \dots, x_n) \in GF(2)^n$ satisfies ${\bf x} \cdot {\bf x} =0$.
Let $y_i = {\bf x} \cdot {\bf r}_i$ for $1 \le i \le k$ where ${\bf r}_i$ is the $i$th row of $G$ and
$z_j = {\bf x} \cdot {\bf s}_j$ for $1 \le j \le n-k$ where ${\bf s}_j$ is the $j$th row of $H$.
The following matrix
\[
G_3 = \left[
\begin{array}{cc|ccc}
1   & 0   & x_1 & \dots & x_n \\
\hline
y_1 & y_1 &     &  {\bf r}_1  &     \\
y_{2} & y_{2} &  & {\bf r}_2  &   \\
\vdots & \vdots & &  \vdots    &   \\
y_k & y_k &     &   {\bf r}_k &     \\
\end{array}
\right]
\]
generates an $[n+2, k+1]$ linear code $C_3$ with $h(C_3)=\ell$. This is called {\bf Construction IV}.
A parity check matrix $H_3$ for $C_3$ is given by
\[
H_3 = \left[
\begin{array}{cc|ccc}
0   & 1   & x_1 & \dots & x_n \\
\hline
z_1 & z_1 &     &  {\bf s}_1  &     \\
z_{2} & z_{2} &   & {\bf s}_2 &   \\
\vdots & \vdots &       & \vdots &   \\
z_{n-k} & z_{n-k} &     & {\bf s}_{n-k}   &     \\
\end{array}
\right].
\]
\end{theorem}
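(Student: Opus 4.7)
The plan is to mirror the proof of Theorem~\ref{build_2} almost verbatim, computing the Gram matrix $G_3 G_3^T$ and verifying $G_3 H_3^T = 0$ by block computation. The crucial structural twist is that here $\mathbf{x}\cdot\mathbf{x}=0$ while the first two entries of the top row of $G_3$ are $(1,0)$; this flips the $(1,1)$ entry of $G_3 G_3^T$ from $0$ (as in Construction I) to $1$, which is precisely what contributes the extra unit of rank needed to preserve $h$ rather than increment it.

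First I would compute $G_3 G_3^T$ entry by entry over $\mathbb{F}_2$. The $(1,1)$ entry is $1\cdot 1 + 0\cdot 0 + \mathbf{x}\cdot\mathbf{x} = 1$, using the hypothesis $\mathbf{x}\cdot\mathbf{x}=0$. For $i\ge 1$, the $(1,i+1)$ entry is $1\cdot y_i + 0\cdot y_i + \mathbf{x}\cdot\mathbf{r}_i = 2y_i = 0$ by the definition of $y_i$. For $i,j\ge 1$, the $(i+1,j+1)$ entry is $y_i y_j + y_i y_j + \mathbf{r}_i\cdot\mathbf{r}_j = \mathbf{r}_i\cdot\mathbf{r}_j$. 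Hence
\[
G_3 G_3^T \;=\; \left[\begin{array}{c|c} 1 & 0 \cdots 0 \\ \hline 0 & \\ \vdots & GG^T \\ 0 & \end{array}\right],
\]
so $\mathrm{rank}(G_3 G_3^T) = 1 + \mathrm{rank}(GG^T) = 1 + (k-\ell)$. Lemma~\ref{lem_rank} then yields $h(C_3) = (k+1) - (k-\ell+1) = \ell$, as required.

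For the parity-check claim I would perform the same kind of block computation on $G_3 H_3^T$: the $(1,1)$ entry is $1\cdot 0 + 0\cdot 1 + \mathbf{x}\cdot\mathbf{x}=0$; the $(1,j+1)$ and $(i+1,1)$ entries vanish because $2y_i=2z_j=0$ combined with the defining inner products $\mathbf{x}\cdot\mathbf{s}_j = z_j$ and $\mathbf{x}\cdot\mathbf{r}_i = y_i$; and the bottom-right $k\times(n-k)$ block collapses to $GH^T = 0$. To finish, I would observe that the top row of $H_3$ has a $1$ in the second coordinate while every other row has equal entries in its first two coordinates, so the top row is independent from the remaining rows, giving $\mathrm{rank}(H_3) = (n-k)+1 = (n+2)-(k+1) = \dim C_3^{\perp}$. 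Thus $H_3$ is a parity check matrix for $C_3$.

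No step here presents a genuine obstacle: the argument is pure characteristic-$2$ bookkeeping plus a rank identity. The only conceptual point worth emphasizing in the write-up is the comparison with Theorem~\ref{build_2}, namely that swapping the parities of $\mathbf{x}\cdot\mathbf{x}$ and of the first two top-row entries is exactly what toggles the $(1,1)$-entry of the Gram matrix between $0$ and $1$, and this is what lets the construction keep the hull dimension unchanged.
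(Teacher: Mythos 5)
Your proposal is correct and follows essentially the same route as the paper: compute the block form of $G_3G_3^T$ (with the $(1,1)$ entry equal to $1$ because ${\bf x}\cdot{\bf x}=0$), conclude $\mathrm{rank}(G_3G_3^T)=1+\mathrm{rank}(GG^T)$, apply Lemma~\ref{lem_rank}, and verify $G_3H_3^T=0$ together with $\mathrm{rank}(H_3)=(n-k)+1$. The paper compresses these verifications into ``the proof is almost the same as that of Theorem~\ref{build_2}''; you have simply written out the entry-by-entry bookkeeping it leaves implicit.
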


\begin{proof}
The proof is almost the same as that of Theorem~\ref{build_2}. It is straightforward to see that
$G_3H_3^T =0$ and rank$(H_3)=1+(n-k)$ which implies that $H_3$ is a parity check matrix for $C_3$.
We compute rank$(G_3G_3^T)$ as follows.

\renewcommand{\arraystretch}{0.6}

\[
G_3G_3^T =
\left[
\begin{array}{cccc}
1 & 0 & \dots & 0 \\
0 &   &       &  \\
\vdots &  &  GG^T \\
0 &    &  &     \\
\end{array}
\right].
\]
Thus rank$(G_3G_3^T) = 1+{\mbox{rank}}(GG^T) = 1 + k - h(C)=1+ k-\ell$ since $h(C)=\ell$. Now $h(C_3)=(k+1)- {\mbox{rank}}(G_3G_3^T)=(k+1)-(1+ k -\ell)=\ell$ as desired.
\end{proof}

We can estimate the minimum distance $d(C_i)$ $(i=1,2,3)$ for Constructions I-IV as follows.

\begin{theorem}
 Let $C$ be a binary linear $[n,k]$ code. Let ${\bf x}=(x_1, x_2, \dots, x_n) \in GF(2)^n$. Then we have the following.

\begin{enumerate}

\item[{(i)}] The minimum distance $d(C_i)$ ($i=1,3$) is
 $\min\{d(C), {\mbox{{\rm weight}}}({\bf x} + C)+1 \}$ or
$\min\{d(C)+2, {\mbox{{\rm weight}}}({\bf x} + C)+1 \}$ for Constructions I and IV.

\item[{(ii)}]
 $d(C_2)$ is $\min\{d(C), {\mbox{{\rm weight}}}({\bf x} + C)+2 \}$ for Construction II and
$\min\{d(C)+1, {\mbox{{\rm weight}}}({\bf x} + C)+1 \}$, or $\min\{d(C)+1, {\mbox{{\rm weight}}}({\bf x} + C)+2 \}$ for Construction III.
\end{enumerate}

\end{theorem}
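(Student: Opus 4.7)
The plan is to write a general codeword of each built-up code $C_i$ explicitly using the scalar $a\in\{0,1\}$ from the new top row of the generator together with a codeword $\mathbf{c}\in C$ coming from the bottom $k$ rows, then evaluate its Hamming weight block-by-block. Throughout set $t:=\mathbf{x}\cdot\mathbf{c}=\sum_i b_i y_i$ and $w:=\mathrm{wt}(\mathbf{x}+C)$.

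For Constructions I and IV the generator has the same block shape, so a codeword reads $(a+t,\,t,\,a\mathbf{x}+\mathbf{c})$. When $a=0$ its weight is $2t+\mathrm{wt}(\mathbf{c})$; minimising over nonzero $\mathbf{c}\in C$ yields $d(C)$ if some weight-$d(C)$ codeword satisfies $\mathbf{x}\cdot\mathbf{c}=0$, and $d(C)+2$ if every minimum-weight codeword has $t=1$ (since then the only way to keep $t=0$ is to use a strictly heavier codeword). When $a=1$ the weight collapses to $1+\mathrm{wt}(\mathbf{x}+\mathbf{c})$, whose infimum over $\mathbf{c}\in C$ is $w+1$. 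Pairing the two $a$-branches produces the two alternative expressions of part~(i).

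For Constructions II and III a codeword takes the form $(a+t,\,a,\,a\mathbf{x}+\mathbf{c})$. Construction II has every $y_i=0$, hence $t\equiv 0$: the $a=0$ branch gives $\mathrm{wt}(\mathbf{c})\ge d(C)$ and the $a=1$ branch gives $2+\mathrm{wt}(\mathbf{x}+\mathbf{c})\ge w+2$, so $d(C_2)=\min\{d(C),w+2\}$. In Construction III some $y_i\neq 0$, so both $t=0,1$ are realisable; the $a=0$ weight $t+\mathrm{wt}(\mathbf{c})$ is minimised at $d(C)$ or $d(C)+1$ according to whether a minimum-weight codeword of $C$ has $\mathbf{x}\cdot\mathbf{c}=0$ or $1$, while the $a=1$ weight $(1+t)+1+\mathrm{wt}(\mathbf{x}+\mathbf{c})$ equals $w+1$ when the minimising coset representative has $t=1$ and $w+2$ otherwise. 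Combining the two $a$-branch minima yields the two alternative formulas stated for Construction III.

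The main obstacle is bookkeeping rather than any single deep step: across the four constructions one has to track, simultaneously, the parity of $\mathbf{x}\cdot\mathbf{c}$ on a minimum-weight codeword of $C$ and on a minimum-weight representative of the coset $\mathbf{x}+C$, since exactly these two booleans choose between the alternative formulas in the theorem. In each case the codeword attaining the claimed value is produced explicitly, so there is no need for a global optimisation argument beyond pairing the $a=0$ and $a=1$ minima derived above.
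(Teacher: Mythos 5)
Your decomposition of a general codeword of $C_i$ as $(a+t,\,t,\,a\mathbf{x}+\mathbf{c})$ with $t=\mathbf{x}\cdot\mathbf{c}$ is exactly the paper's split of $C_i$ into the subcode $S_1$ spanned by the bottom $k$ rows (the $a=0$ branch) and its coset by the top row (the $a=1$ branch); your observation that in the $a=1$ branch the two prepended coordinates always contribute weight exactly $1$ reproduces the paper's identity $\mathrm{weight}\bigl((1\,0\,\mathbf{x})+S_1\bigr)=1+\mathrm{weight}(\mathbf{x}+C)$. You are in fact more explicit than the paper, which omits the proof of part (ii) entirely.

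However, there is a gap in your $a=0$ branch for Constructions I and IV, and it is the same assertion the paper leaves unjustified when it states that $d(S_1)$ is $d(C)$ or $d(C)+2$. You claim that if every minimum-weight codeword of $C$ has $t=1$ then the branch minimum is $d(C)+2$, ``since then the only way to keep $t=0$ is to use a strictly heavier codeword''; but a strictly heavier codeword with $t=0$ need only have weight $d(C)+1$, in which case the branch minimum is $d(C)+1$, not $d(C)+2$. Concretely, take $C=\langle 1100000,\,0011100\rangle$, so $d(C)=2$, and $\mathbf{x}=1000011$, so $\mathbf{x}\cdot\mathbf{x}=1$ and Construction I applies with $y_1=1$, $y_2=0$: enumerating the seven nonzero codewords of $C_1$ gives $d(C_1)=3$ (attained by $(00\,|\,0011100)$), while $\mathrm{weight}(\mathbf{x}+C)+1=4$, so $d(C_1)$ equals neither $\min\{d(C),\,\mathrm{weight}(\mathbf{x}+C)+1\}=2$ nor $\min\{d(C)+2,\,\mathrm{weight}(\mathbf{x}+C)+1\}=4$. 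The correct conclusion from your computation is only that the $a=0$ minimum lies in $\{d(C),\,d(C)+1,\,d(C)+2\}$; a similar caveat applies to how you compress the four branch combinations for Construction III into the two displayed formulas. Since the paper's own proof makes the same unjustified jump, your argument is no weaker than the published one, but the justification you supply for that step is incorrect as written, and the step itself cannot be repaired without weakening the statement.
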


\begin{proof}
We prove statement (i). Let $i=1,3$.
Let $S_1$ be the code spanned by all the rows of $G_i$ except for the top row.
Then $C_i$ is the disjoint union of $S_1$ and $(1 ~0~ {\bf x})+S_1$.
Thus $d(C_i)$ is the minimum of $d(S_1)$ and
weight($(1 ~0 ~{\bf x})+S_1$). We note that $d(S_1)$ is $d(C)$ or $d(C)+2$ and that
weight($(1 ~ 0 ~{\bf x})+S_1$)= 1+ weight(${\bf x} + C)$). Hence we obtain (i).
Similarly, we can prove (ii), whose proof is omitted.
\end{proof}

The following is straightforward since $d(C) \le \rho(C)$, where $\rho(C)$ is the covering radius of $C$.

\begin{cor}
 Let $C$ be a binary linear $[n,k]$ code with its covering radius $\rho(C)$. Then
 the minimum distance $d(C_i)$ ($i=1,2,3$) for Constructions I-IV satisfies
$\min\{d(C), {\mbox{{\rm weight}}}({\bf x} + C)+1 \}$ $\le d(C_i) \le \rho(C)+2$.
\end{cor}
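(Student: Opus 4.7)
The plan is to leverage the preceding theorem, which already gives explicit expressions for $d(C_i)$ in each of the four constructions, and then simply bound these expressions above and below by the target quantities. Concretely, I would split the argument into two parts: the lower bound $\min\{d(C),\mathrm{weight}(\mathbf{x}+C)+1\}\le d(C_i)$ and the upper bound $d(C_i)\le \rho(C)+2$.

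For the lower bound I would go through the five alternatives supplied by the preceding theorem, namely
\[
\min\{d(C),w+1\},\ \min\{d(C)+2,w+1\},\ \min\{d(C),w+2\},\ \min\{d(C)+1,w+1\},\ \min\{d(C)+1,w+2\},
\]
where I abbreviate $w=\mathrm{weight}(\mathbf{x}+C)$. In each case the two arguments of the $\min$ are individually at least $d(C)$ or at least $w+1$, so each alternative is bounded below by $\min\{d(C),w+1\}$. The first case is an equality, and the other four follow from the monotonicity of $\min$ in each argument. This disposes of the lower bound for every one of Constructions I--IV.

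For the upper bound I would invoke the definition of the covering radius, which gives $\mathrm{weight}(\mathbf{x}+C)\le\rho(C)$ for every $\mathbf{x}\in GF(2)^n$. In each of the five expressions above the min is dominated by its right-hand argument, so $d(C_i)$ is at most $w+1$ or $w+2$, hence at most $\rho(C)+2$ in every case. This is the point where the hint of the excerpt ($d(C)\le\rho(C)$ in the spirit of coset-weight bounds) actually enters: it is really the bound on $w$, not on $d(C)$ itself, that drives the argument.

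I do not anticipate a genuine obstacle here; the proof is essentially a case-check wrapped around the observation $\mathrm{weight}(\mathbf{x}+C)\le\rho(C)$. The only matter requiring a little care is bookkeeping: making sure that every one of the five expressions in the preceding theorem is verified for both inequalities, and noting that the upper bound is uniform (it does not depend on which of the possible values $d(C_i)$ actually takes).
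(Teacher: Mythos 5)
Your proof is correct and is essentially the argument the paper intends: a case-check of the five expressions supplied by the preceding theorem, with the lower bound following from monotonicity of $\min$ in each argument and the upper bound from $\mathrm{weight}(\mathbf{x}+C)\le\rho(C)$. You are also right that the paper's stated justification $d(C)\le\rho(C)$ is not what actually drives the upper bound (and indeed that inequality fails in general, e.g.\ for the binary repetition code of length $n\ge 3$, where $d=n$ but $\rho=\lfloor n/2\rfloor$); the coset-weight bound you invoke is the relevant fact.
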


\begin{remark}
{\em
Although we have considered Constructions I-IV for linear codes over $GF(2)$, it is easy to see that the same Constructions I-IV in Theorems 1-3 hold for linear codes over $GF(q)$, where $q=2^r$ for any integer $r \ge 1$. If $q$ is odd, then a slight modification of Constructions I-IV based on the building-up construction for self-dual codes over $GF(q)$~\cite{KimLee1},\cite{KimLee2} will give results similar to Theorems 1-3.
}
\end{remark}

\section{Some interesting optimal linear codes}
We display some interesting optimal linear codes with hull dimensions 2 and 3 from a linear code with hull dimension 1.



Start with an $h_1$-optimal $[10,6,3]$ code $C$ whose generator matrix $G$ is given below.

\[
G =
\begin{bNiceMatrix}
1 0 0 0 0 0 0 1 0 1 \\
0 1 0 0 0 0 1 0 0 1 \\
0 0 1 0 0 0 1 1 1 0 \\
0 0 0 1 0 0 0 1 1 0 \\
0 0 0 0 1 0 1 0 1 0 \\
0 0 0 0 0 1 1 1 0 0 \\
\end{bNiceMatrix}
\]

\begin{example}
{\rm
Let us take ${\bf x}=(0 0 0 0 0 1 1 0 0 0)$ with $G$ by Construction III to get an $h_2$-optimal $[12, 7, 3]$ linear code $C'$ with $h=2$. Its generator matrix $G'$ is written as follows.

\[
G' =
\left[
\begin{array}{c|c}
1 1 & 0 0 0 0 0 1 1 0 0 0 \\
\hline
0 0 & 1 0 0 0 0 0 0 1 0 1 \\
1 0 & 0 1 0 0 0 0 1 0 0 1 \\
1 0 & 0 0 1 0 0 0 1 1 1 0 \\
0 0 & 0 0 0 1 0 0 0 1 1 0 \\
1 0 & 0 0 0 0 1 0 1 0 1 0 \\
0 0 & 0 0 0 0 0 1 1 1 0 0 \\
\end{array}
\right]
\sim
\begin{bNiceMatrix}
1 0 0 0 0 0 1 0 1 0 1 0 \\
0 1 0 0 0 0 1 0 1 1 1 0 \\
0 0 1 0 0 0 0 0 0 1 0 1 \\
0 0 0 1 0 0 1 0 0 0 1 1 \\
0 0 0 0 1 0 1 0 0 1 0 0 \\
0 0 0 0 0 1 0 0 0 1 1 0 \\
0 0 0 0 0 0 0 1 1 1 0 0 \\
\end{bNiceMatrix}
\]



By Proposition~\ref{prop-EAQECC}, we can obtain a $[[12, 5, 3; 3]]_2$ EAQECC from $C'$.

}
\end{example}

\begin{example}
{\rm
Let us take ${\bf x}=(1 1 1 1 1 1 0 0 1 1)$ with $G$ above by Construction III to get an optimal $[12, 7, 4]$ linear code $C''$ with $h=3$. Its generator matrix $G''$ is written in standard from after row operations.

\[
G'' =
\left[
\begin{array}{c|c}
1 1 & 1 1 1 1 1 1 0 0 1 1 \\
\hline
0 0 & 1 0 0 0 0 0 0 1 0 1 \\
0 0 & 0 1 0 0 0 0 1 0 0 1 \\
0 0 & 0 0 1 0 0 0 1 1 1 0 \\
0 0 & 0 0 0 1 0 0 0 1 1 0 \\
0 0 & 0 0 0 0 1 0 1 0 1 0 \\
1 0 & 0 0 0 0 0 1 1 1 0 0 \\
\end{array}
\right]
\sim
\begin{bNiceMatrix}
1 0 0 0 0 0 0 1 1 1 0 0 \\
0 1 0 0 0 0 0 0 1 1 0 1 \\
0 0 1 0 0 0 0 1 1 0 0 1 \\
0 0 0 1 0 0 0 1 0 1 0 1 \\
0 0 0 0 1 0 0 0 1 1 1 0 \\
0 0 0 0 0 1 0 1 1 0 1 0 \\
0 0 0 0 0 0 1 1 0 1 1 0 \\
\end{bNiceMatrix}
\]


By Proposition~\ref{prop-EAQECC}, we can obtain a $[[12, 5, 3; 3]]_2$ EAQECC from $C''$.

 By exhaustive search, we have checked that there are up to equivalence exactly two optimal $[12, 7, 4]$ codes. One of them is the above $[12, 7, 4]$ code $C''$ with $h=3$. The other is a $[12, 7, 4]$ code~\cite{KimWeb} with $h=1$ whose weight distribution is
 $[ \left<0, 1\right>, \left<4, 38\right>, \left<6, 52\right>, \left<8, 33\right>, \left<10, 4\right> ]$.
  This code gives a $[[12, 6, 4; 4]]_2$ EAQECC by Proposition~\ref{prop-EAQECC}.
}

\end{example}


\section{Optimal linear codes with several hulls and the construction of EAQECC}

We construct several optimal linear codes of lengths up to 13 with $h=i$ ($i=1,2,3,4,5$) from a given linear code of a fixed hull dimension $h$.
Tables~1,3,5,7,9 display best possible minimum distances of linear $[n,k]$ codes from hull dimensions 1 to 5.
The upper bounds for the minimum distances in the tables are from Grassl's table~\cite{Gra} by taking not the hull dimension into account and by brute force search.
Each cell in each table denotes the highest minimum distance $d(n,k)$ for given $n, k$, and $h=i$ together with the superscripts referring to Constructions I to IV and $o$ meaning that the codes are optimal.
For $n=12$, we apply Constructions I, II, and/or III. For $n=13$, we apply Constructions I, III, and/or IV. All computations were done by Magma~\cite{Bos}.
To save the space, we post whole information about the codes in Tables~1,3,5,7,9 in the author's website~\cite{KimWeb} and list most generator matrices for $n=12$ and $13$ in this paper.

Tables~2,4,6,8,10 display associated $[[n, k, d; c]]_2$ EAQECC based on Proposition~\ref{prop-EAQECC} and Tables 1, 3, 5, 7, 9. In other words, we obtain $[[n, k-h, d; n-k-h]]_2$ EAQECC from binary $[n,k,d]$ codes with hull dimension $h$.

\begin{example}
{\rm
Fix the hull dimension $h=1$.
For any $n$ with $k$ such that $1 \le k \le n \le 11$ and $n=12$ with $k~(1 \le k \le 4)$,  we ran exhaustive search to get optimal or $h_1$-optimal codes. We note that there is an optimal $[12, 5, 4]$ code with $h=1$ from Magma database.

For $n=12$ with $k \ge 6$, we apply Constructions I and II to all the LCD codes of length $10$ and dimension $k-1$ displayed in~\cite{HarSai}.
 More precisely, we construct optimal $[12, 6, 4]$ and $[12, 9, 2]$ codes by Construction I. Similarly, we construct optimal $[12, 7,4]$ and $[12, 8, 3]$ codes by Construction III.

Let $n=13$. We construct
optimal $[13, 4, 6]$, $[13, 5,5]$, $[13, 6,4]$, $[13, 7, 4]$, $[13, 8, 3]$, $[13, 10, 2]$, $[13, 11, 2]$ codes by Construction I.
We also construct an $h_1$-optimal $[13, 3, 6]$ code by Construction I, which is justified by the non-existence of $[13, 3, 7]$ codes with $h=1$ using exhaustive search.
Similarly, we construct an $h_1$-optimal $[13, 9, 2]$ code by Construction I, which is justified by the non-existence of $[13, 9, 3]$ codes with $h=1$ using exhaustive search.
}
\end{example}

\medskip
In what follows, $G_{n,k,d}^{i}$ refers to a generator matrix for a binary $[n,k,d]$ code $C_{n,k,d}^{i}$ with $h=i$ and the highest minimum distance $d=d(n,k)$.

\medskip

\renewcommand{\arraystretch}{0.6}

$\bullet$ $n=12$ with $h=1$


\[
G_{12,6,4}^{1}=
\begin{bNiceMatrix}
1 0 1 1 1 1 0 1 1 1 0 0 \\
1 1 1 0 0 0 1 0 0 0 1 0 \\
1 1 0 1 0 0 1 0 0 0 0 1 \\
1 1 0 0 1 0 0 0 0 0 1 1 \\
0 0 0 0 0 1 1 0 0 1 0 1 \\
0 0 0 0 0 0 0 0 1 1 1 1 \\
\end{bNiceMatrix},~
G_{12,7,4}^{1}=
\begin{bNiceMatrix}
1 1 1 1 1 1 0 1 1 0 0 0\\
1 0 1 0 0 0 1 0 0 1 0 0\\
1 0 0 1 0 0 1 0 0 0 1 0\\
1 0 0 0 1 0 0 0 0 1 1 0\\
1 0 0 0 0 1 1 0 0 0 0 1\\
1 0 0 0 0 0 0 1 0 1 0 1\\
1 0 0 0 0 0 0 0 1 0 1 1\\
\end{bNiceMatrix}
\]

\[
G_{12,8,3}^{1}=
\begin{bNiceMatrix}
1 1 1 1 1 1 0 0 0 0 1 1 \\
1 0 1 0 0 0 0 0 0 0 1 1 \\
0 0 0 1 0 0 0 0 0 1 0 1 \\
0 0 0 0 1 0 0 0 0 1 1 0 \\
1 0 0 0 0 1 0 0 0 1 1 1 \\
0 0 0 0 0 0 1 0 0 0 1 1 \\
1 0 0 0 0 0 0 1 0 1 0 1 \\
0 0 0 0 0 0 0 0 1 1 1 1 \\
\end{bNiceMatrix},~
G_{12,9,2}^{1}=
\begin{bNiceMatrix}
1 0 1 0 0 0 0 0 0 0 0 0\\
1 1 1 0 0 0 0 0 0 0 0 1\\
0 0 0 1 0 0 0 0 0 0 0 1\\
0 0 0 0 0 1 0 0 0 0 0 1\\
0 0 0 0 0 0 1 0 0 0 0 1\\
0 0 0 0 0 0 0 1 0 0 0 1\\
0 0 0 0 0 0 0 0 1 0 0 1\\
0 0 0 0 0 0 0 0 0 1 0 1\\
0 0 0 0 0 0 0 0 0 0 1 1\\
\end{bNiceMatrix}
\]

\[
G_{12,10,1}^{1}=
\begin{bNiceMatrix}
1 0 1 0 0 0 0 0 0 0 0 0\\
0 1 0 0 0 0 0 0 0 0 0 0\\
0 0 0 0 1 0 0 0 0 0 0 0\\
0 0 0 0 0 1 0 0 0 0 0 0\\
0 0 0 0 0 0 1 0 0 0 0 0\\
0 0 0 0 0 0 0 1 0 0 0 0\\
0 0 0 0 0 0 0 0 1 0 0 0\\
0 0 0 0 0 0 0 0 0 1 0 0\\
0 0 0 0 0 0 0 0 0 0 1 0\\
0 0 0 0 0 0 0 0 0 0 0 1\\
\end{bNiceMatrix},~
G_{12,11,2}^{1}=
\begin{bNiceMatrix}
1 0 0 0 0 0 0 0 0 0 0 1\\
0 1 0 0 0 0 0 0 0 0 0 1\\
0 0 1 0 0 0 0 0 0 0 0 1\\
0 0 0 1 0 0 0 0 0 0 0 1\\
0 0 0 0 1 0 0 0 0 0 0 1\\
0 0 0 0 0 1 0 0 0 0 0 1\\
0 0 0 0 0 0 1 0 0 0 0 1\\
0 0 0 0 0 0 0 1 0 0 0 1\\
0 0 0 0 0 0 0 0 1 0 0 1\\
0 0 0 0 0 0 0 0 0 1 0 1\\
0 0 0 0 0 0 0 0 0 0 1 1\\
\end{bNiceMatrix}
\]

$\bullet$ $n=13$ with $h=1$

\[
G_{13, 3,6}^{1}=
\begin{bNiceMatrix}
1 0 1 0 1 1 0 0 0 1 1 0 0\\
1 1 1 1 1 0 0 0 1 1 0 0 1\\
0 0 0 0 0 1 1 1 1 1 0 0 1\\
\end{bNiceMatrix},
~
G_{13, 4,6}^{1}=
\begin{bNiceMatrix}
1 0 0 1 1 1 0 1 1 1 1 0 0\\
1 1 1 0 0 1 1 0 1 1 0 1 1\\
0 0 0 1 0 1 1 1 0 1 0 0 1\\
0 0 0 0 1 1 1 1 1 0 0 1 0\\
\end{bNiceMatrix}
\]

\[
G_{13, 5,6}^{1}=
\begin{bNiceMatrix}
1 0 1 1 1 0 1 1 1 1 0 0 0\\
1 1 1 0 0 0 1 1 0 0 1 0 0\\
1 1 0 1 0 0 1 0 1 0 0 1 0\\
1 1 0 0 1 0 1 0 1 0 1 0 1\\
0 0 0 0 0 1 1 1 0 0 0 1 1\\
\end{bNiceMatrix},~
G_{13, 6,4}^{1}=
\begin{bNiceMatrix}
1 0 1 0 0 1 1 0 0 0 0 0 0\\
1 1 1 0 0 0 0 0 1 0 1 1 1\\
0 0 0 1 0 0 0 1 0 1 0 1 0\\
0 0 0 0 1 0 0 1 1 0 0 1 1\\
1 1 0 0 0 1 0 0 0 1 1 1 0\\
1 1 0 0 0 0 1 1 1 0 1 0 1\\
\end{bNiceMatrix}
\]

\[
G_{13, 7,4}^{1}=
\begin{bNiceMatrix}
1 0 1 0 0 1 1 0 0 0 0 0 0\\
1 1 1 0 0 0 0 0 0 1 0 1 1\\
0 0 0 1 0 0 0 1 0 1 1 1 1\\
0 0 0 0 1 0 0 1 0 1 0 0 1\\
1 1 0 0 0 1 0 0 0 1 1 0 1\\
1 1 0 0 0 0 1 1 0 1 0 1 0\\
0 0 0 0 0 0 0 0 1 1 1 1 0\\
\end{bNiceMatrix},
~
G_{13, 8,3}^{1}=
\begin{bNiceMatrix}
1 0 1 0 0 1 1 0 0 0 0 0 0\\
1 1 1 0 0 0 0 0 0 1 0 1 1\\
0 0 0 1 0 0 0 0 0 1 1 0 1\\
0 0 0 0 1 0 0 0 0 0 1 1 0\\
1 1 0 0 0 1 0 0 0 1 1 1 0\\
1 1 0 0 0 0 1 0 0 1 1 1 1\\
0 0 0 0 0 0 0 1 0 0 1 0 1\\
0 0 0 0 0 0 0 0 1 0 0 1 1\\
\end{bNiceMatrix}
\]

\[
G_{13, 9,2}^{1}=
\begin{bNiceMatrix}
1 0 1 0 0 0 0 0 0 0 0 0 0\\
1 1 1 0 0 0 0 0 0 0 0 0 1\\
0 0 0 1 0 0 0 0 0 0 0 0 1\\
0 0 0 0 1 0 0 0 0 0 0 0 1\\
0 0 0 0 0 1 0 0 0 0 0 0 1\\
0 0 0 0 0 0 1 0 0 0 0 0 1\\
0 0 0 0 0 0 0 0 0 1 0 0 1\\
0 0 0 0 0 0 0 0 0 0 1 0 1\\
0 0 0 0 0 0 0 0 0 0 0 1 1\\
\end{bNiceMatrix},
~
G_{13, 10,2}^{1}=
\begin{bNiceMatrix}
1 0 1 0 0 0 0 0 0 0 0 0 0\\
1 1 1 0 0 0 0 0 0 0 0 1 0\\
0 0 0 1 0 0 0 0 0 0 0 1 0\\
0 0 0 0 1 0 0 0 0 0 0 1 1\\
0 0 0 0 0 1 0 0 0 0 0 1 0\\
0 0 0 0 0 0 1 0 0 0 0 1 0\\
0 0 0 0 0 0 0 1 0 0 0 1 0\\
0 0 0 0 0 0 0 0 1 0 0 1 0\\
0 0 0 0 0 0 0 0 0 1 0 1 0\\
0 0 0 0 0 0 0 0 0 0 1 1 0\\
\end{bNiceMatrix}
\]

\[
G_{13, 11,2}^{1}=
\begin{bNiceMatrix}
1 0 1 0 0 0 0 0 0 0 0 0 0\\
1 1 1 0 0 0 0 0 0 0 0 0 1\\
0 0 0 1 0 0 0 0 0 0 0 0 1\\
0 0 0 0 1 0 0 0 0 0 0 0 1\\
0 0 0 0 0 1 0 0 0 0 0 0 1\\
0 0 0 0 0 0 1 0 0 0 0 0 1\\
0 0 0 0 0 0 0 1 0 0 0 0 1\\
0 0 0 0 0 0 0 0 1 0 0 0 1\\
0 0 0 0 0 0 0 0 0 1 0 0 1\\
0 0 0 0 0 0 0 0 0 0 1 0 1\\
0 0 0 0 0 0 0 0 0 0 0 1 1\\
\end{bNiceMatrix}
\]

\setlength{\tabcolsep}{2pt}

\begin{table}[h]
{\begin{center}
{
\begin{tabular}{|c||c|c|c|c|c|c|c|c|c|c|c|c|}
\hline
$n/k$ & 1 & 2 & 3 & 4 & 5 & 6 & 7 & 8 & 9 & 10 & 11 & 12\tabularnewline
\hline
\hline
1 & 0 &  &  &  &  &  &  &  &  &  &  & \tabularnewline
\hline
2 & 2 & 0 &  &  &  &  &  &  &  &  &  & \tabularnewline
\hline
3 & 2 & 1 & 0 &  &  &  &  &  &  &  &  & \tabularnewline
\hline
4 & 4 & 1 & 2 & 0 &  &  &  &  &  &  &  & \tabularnewline
\hline
5 & 4 & 3 & 2 & 1 & 0 &  &  &  &  &  &  & \tabularnewline
\hline
6 & 6 & 3 & 2 & 1 & 2 & 0 &  &  &  &  &  & \tabularnewline
\hline
7 & 6 & 4 & 3 & 2 & 2 & 1 & 0 &  &  &  &  & \tabularnewline
\hline
8 & 8 & 4 & 4 & 3 & 2 & 1 & 2 & 0 &  &  &  & \tabularnewline
\hline
9 & 8 & 5 & 4 & 3 & 3 & 2 & 2 & 1 & 0 &  &  & \tabularnewline
\hline
10 & 10 & 5 & 5 & 4 &  4&  3&  2&  1&  2& 0 &  & \tabularnewline
\hline
11 & 10 & 7 & 6 &  5& $4$  & 3 & 3 & 2 & 2 &   1 & 0 & \tabularnewline
\hline
12 & $12^o$ & 7 & $6^o$ & 5 & $4^o$ & $4^{I,o}$ & $4^{III,o}$ & $3^{III,o}$ & $2^{I,o}$ & 1 & $2^o$ & 0\tabularnewline
\hline
13 & 12 &$8^o$  &$6^{I}$  &$6^{I,o}$  & $5^{I,o}$  & $4^{I,o}$ & $4^{I,o}$  & $3^{I}$  & $2^{I}$ & $2^{I,o}$ & $2^{I,o}$  & 1 \tabularnewline
\hline
\end{tabular}}
\caption {Each cell refers to the highest minimum distance $d(n,k)$ for $n \leq13$ when $h=1$, and
examples of corresponding generator matrices $G_{12, k, d}^1~(6 \le k \le 11)$ and $G_{13, k, d}^1~ (3 \le k \le 11)$ }

\end{center}
}

\end{table}

\setlength{\tabcolsep}{1pt}

\begin{table}[h]
{\begin{center}
{
\begin{tabular}{|c||c|c|c|c|c|c|c|c|c|c|c|c|}
\hline
$n/k$ & 0 & 1 & 2 & 3 & 4 & 5 & 6 & 7 & 8 & 9 & 10 & 11 \tabularnewline
\hline
\hline
2 & $(2;0)$ &  &  &  &  &  &  &  &  &  &  & \tabularnewline
\hline
3 & $(2;1)$ & $(1;0)$ &  &  &  &  &  &  &  &  &  & \tabularnewline
\hline
4 & $(4;2)$ & $(1;1)$ & $(2;0)$ &  &  &  &  &  &  &  &  & \tabularnewline
\hline
5 & $(4;3)$ & $(3;2)$ & $(2;1)$ & $(1;0)$ &  &  &  &  &  &  &  & \tabularnewline
\hline
6 & $(6;4)$ & $(3;3)$ & $(2;2)$ & $(1;1)$ & $(2;0)$ &  &  &  &  &  &  & \tabularnewline
\hline
7 & $(6;5)$ & $(4;4)$ & $(3;3)$ & $(2;2)$ & $(2;1)$ & $(1;0)$ &  &  &  &  &  & \tabularnewline
\hline
8 & $(8;6)$ & $(4;5)$ & $(4;4)$ & $(3;3)$ & $(2;2)$ & $(1;1)$ & $(2;0)$ & &  &  &  & \tabularnewline
\hline
9 & $(8;7)$ & $(5;6)$ & $(4;5)$ & $(3;4)$ & $(3;3)$ & $(2;2)$ & $(2;1)$ & $(1;0)$ &  &  &  & \tabularnewline
\hline
10 & $(10;8)$ & $(5;7)$ & $(5;6)$ & $(4;5)$ &  $(4;4)$&  $(3;3)$&  $(2;2)$&  $(1;1)$ &  $(2;0)$ &  &  & \tabularnewline
\hline
11 & $(10;9)$ & $(7;8)$ & $(6;7)$ &  $(5;6)$& $(4;5)$  & $(3;4)$ & $(3;3)$ & $(2;2)$ & $(2;1)$ &   $(1;0)$ &  & \tabularnewline
\hline
12 & $(12;10)$ & $(7;9)$ & $(6;8)$ & $(5;7)$ & $(4;6)$ & $(4;5)$ & $(4;4)$ & $(3;3)$ & $(2;2)$ & $(1;1)$ & $(2;0)$ & \tabularnewline
\hline
13 & $(12;11)$ & $(8;10)$ & $(6;9)$  & $(6;8)$  & $(5;7)$  & $(4;6)$ & $(4;5)$  & $(3;4)$  & $(2;3)$ & $(2;2)$ & $(2;1)$  & $(1;0)$ \tabularnewline
\hline
\end{tabular}}
\caption {$[[n, k, d; c]]_2$  EAQECC with $(d;c)$  for $n \leq 13$ when $h=1$ based on Proposition 1 and Table 1}

\end{center}
}
\end{table}

\begin{example}
{\rm
Fix the hull dimension $h=2$.
For any $n$ with $k$ such that $1 \le k \le n \le 11$ and $n=12$ with $k~(1 \le k \le 4)$,  we ran exhaustive search to get optimal or $h_2$-optimal codes.

For $n=12$ with $k \ge 5$, we apply Constructions I, II or III to LCD codes or linear codes with $h=1$ of length $10$ and dimension $k-1$. More precisely, we construct optimal $[12, 5, 4]$, $[12, 6, 4]$, $[12, 8, 3]$, $[12, 9, 2]$, $[12, 10, 2]$ codes from $[10, 4, 4]$, $[10,5,3]$, $[10, 7, 2]$, $[10, 8, 2]$, $[10, 9, 1]$ codes with $h=0$ respectively by Construction III. On the other hand, we also construct a $[12, 7,3]$ code from a $[10,6,3]$ code with $h=0$ by Construction III. By exhaustive search, we check that it is $h_2$-optimal.

Let $n=13$. We construct
optimal $[13, 3, 7]$, $[13, 4, 6]$, $[13, 5,5]$, $[13, 6,4]$, $[13, 8, 4]$ codes by Construction III from LCD codes of length $11$ and dimensions $k=2, 3, 4, 5, 7$ respectively. We also construct an optimal $[13, 7, 4]$ code from a linear $[11, 6, 3]$ code with $h=1$ by Construction I. For $k=2, 10, 11$, it is easy to construct directly optimal or $h_2$-optimal $[13, k]$ codes. For $k=9$, it is known that there exist an optimal $[13, 9, 3]$ code with $h=2$ by Magma database.
}
\end{example}

$\bullet$ $n=12$ with $h=2$

\[
G_{12, 3,5}^{2}=
\begin{bNiceMatrix}
1 0 0 0 1 1 0 1 1 0 1 0\\
0 1 0 1 0 0 0 1 1 1 0 0\\
0 0 1 1 1 1 1 0 0 0 0 0\\
\end{bNiceMatrix},~
G_{12, 4,6}^{2}=
\begin{bNiceMatrix}
1 0 0 0 1 0 1 0 1 0 1 1\\
0 1 0 0 1 1 0 1 0 1 0 1\\
0 0 1 0 1 1 1 0 0 1 1 0\\
0 0 0 1 1 1 1 1 1 0 0 0\\
\end{bNiceMatrix}
\]

\[
G_{12, 5,4}^{2}=
\begin{bNiceMatrix}
1 1 0 0 0 1 1 0 0 0 0 0\\
1 0 1 0 0 0 1 1 1 0 0 1\\
0 0 0 1 0 0 0 1 1 1 0 1\\
1 0 0 0 1 0 1 0 1 0 1 1\\
1 0 0 0 0 1 0 0 1 1 1 1\\
\end{bNiceMatrix}, ~
G_{12, 6,4}^{2}=
\begin{bNiceMatrix}
1 1 1 1 1 1 0 0 1 1 0 0\\
1 0 1 0 0 0 1 0 0 0 1 0\\
1 0 0 1 0 0 1 0 0 0 0 1\\
1 0 0 0 1 0 0 0 0 0 1 1\\
0 0 0 0 0 1 1 0 0 1 0 1\\
0 0 0 0 0 0 0 0 1 1 1 1\\
\end{bNiceMatrix}
\]

\[
G_{12, 7,3}^{2}=
\begin{bNiceMatrix}
1 1 1 1 0 1 1 0 0 0 0 0\\
1 0 1 0 0 0 0 0 1 0 1 1\\
1 0 0 1 0 0 0 0 1 1 0 1\\
0 0 0 0 1 0 0 0 0 1 1 0\\
1 0 0 0 0 1 0 0 0 1 0 1\\
1 0 0 0 0 0 1 0 1 1 1 1\\
0 0 0 0 0 0 0 1 1 1 1 0\\
\end{bNiceMatrix},~
G_{12, 8,3}^{2}=
\begin{bNiceMatrix}
1 1 1 1 1 1 0 0 0 1 1 0\\
0 0 1 0 0 0 0 0 0 0 1 1\\
0 0 0 1 0 0 0 0 0 1 0 1\\
1 0 0 0 1 0 0 0 0 1 1 0\\
1 0 0 0 0 1 0 0 0 1 1 1\\
1 0 0 0 0 0 1 0 0 0 1 1\\
1 0 0 0 0 0 0 1 0 1 0 1\\
0 0 0 0 0 0 0 0 1 1 1 1\\
\end{bNiceMatrix}
\]

\[
G_{12, 9,2}^{2}=
\begin{bNiceMatrix}
1 1 0 1 1 0 0 0 0 0 0 0\\
0 0 1 0 0 0 0 0 0 0 1 1\\
1 0 0 1 0 0 0 0 0 0 1 1\\
1 0 0 0 1 0 0 0 0 0 1 0\\
0 0 0 0 0 1 0 0 0 0 1 1\\
0 0 0 0 0 0 1 0 0 0 1 1\\
0 0 0 0 0 0 0 1 0 0 1 1\\
0 0 0 0 0 0 0 0 1 0 1 1\\
0 0 0 0 0 0 0 0 0 1 1 1\\
\end{bNiceMatrix},~
G_{12, 10,2}^{2}=
\begin{bNiceMatrix}
1 1 1 1 1 1 1 1 1 1 1 1\\
1 0 1 0 0 0 0 0 0 0 0 0\\
1 0 0 0 1 0 0 0 0 0 0 0\\
1 0 0 0 0 1 0 0 0 0 0 0\\
1 0 0 0 0 0 1 0 0 0 0 0\\
1 0 0 0 0 0 0 1 0 0 0 0\\
1 0 0 0 0 0 0 0 1 0 0 0\\
1 0 0 0 0 0 0 0 0 1 0 0\\
1 0 0 0 0 0 0 0 0 0 1 0\\
1 0 0 0 0 0 0 0 0 0 0 1\\
\end{bNiceMatrix}
\]

$\bullet$ $n=13$ with $h=2$

\[
G_{13, 3,7}^{2}=
\begin{bNiceMatrix}
1 1 1 1 0 0 0 1 1 0 1 1 0\\
0 0 1 1 1 0 0 0 1 1 1 0 1\\
1 0 0 0 0 1 1 1 1 1 1 0 1\\
\end{bNiceMatrix},~
G_{13, 4,6}^{2}=
\begin{bNiceMatrix}
1 1 0 1 1 1 0 1 1 1 0 0 0\\
1 0 1 0 0 1 1 0 1 1 0 1 1\\
0 0 0 1 0 1 1 1 0 1 0 0 1\\
0 0 0 0 1 1 1 1 1 0 0 1 0\\
\end{bNiceMatrix}
\]

\[
G_{13, 5,5}^{2}=
\begin{bNiceMatrix}
1 1 0 1 1 1 1 0 1 1 0 0 0\\
1 0 1 0 0 0 1 1 0 0 1 0 0\\
1 0 0 1 0 0 1 0 1 0 0 1 0\\
1 0 0 0 1 0 1 0 1 0 1 0 1\\
0 0 0 0 0 1 1 1 0 0 0 1 1\\
\end{bNiceMatrix},~
G_{13, 6,4}^{2}=
\begin{bNiceMatrix}
1 1 1 1 0 0 0 0 0 0 0 0 0\\
1 0 1 0 0 0 0 0 1 0 1 1 1\\
1 0 0 1 0 0 0 1 0 1 0 1 0\\
0 0 0 0 1 0 0 1 1 0 0 1 1\\
0 0 0 0 0 1 0 0 0 1 1 1 0\\
0 0 0 0 0 0 1 1 1 0 1 0 1\\
\end{bNiceMatrix}
\]

\[
G_{13, 7,4}^{2}=
\begin{bNiceMatrix}
1 0 1 1 0 0 0 1 1 1 0 1 1\\
1 1 1 0 0 0 0 0 0 1 0 1 0\\
1 1 0 1 0 0 0 0 1 0 0 1 0\\
0 0 0 0 1 0 0 0 1 1 1 0 0\\
1 1 0 0 0 1 0 0 0 1 1 0 0\\
1 1 0 0 0 0 1 0 1 0 1 0 0\\
1 1 0 0 0 0 0 1 1 1 0 0 0\\
\end{bNiceMatrix},~
G_{13, 8,4}^{2}=
\begin{bNiceMatrix}
1 1 1 1 1 0 1 0 0 0 1 1 0\\
0 0 1 0 0 0 0 0 0 1 0 1 1\\
0 0 0 1 0 0 0 0 0 1 1 0 1\\
1 0 0 0 1 0 0 0 0 0 1 1 0\\
0 0 0 0 0 1 0 0 0 1 1 1 0\\
1 0 0 0 0 0 1 0 0 1 1 1 1\\
1 0 0 0 0 0 0 1 0 0 1 0 1\\
1 0 0 0 0 0 0 0 1 0 0 1 1\\
\end{bNiceMatrix}
\]

\[
G_{13, 9,3}^{2}=
\begin{bNiceMatrix}
1 0 0 0 0 0 0 0 0 0 0 1 1\\
0 1 0 0 0 0 0 0 1 0 1 0 0\\
0 0 1 0 0 0 0 0 1 0 1 1 1\\
0 0 0 1 0 0 0 0 1 0 0 0 1\\
0 0 0 0 1 0 0 0 1 0 0 1 0\\
0 0 0 0 0 1 0 0 0 0 1 0 1\\
0 0 0 0 0 0 1 0 0 0 1 1 0\\
0 0 0 0 0 0 0 1 1 0 0 1 1\\
0 0 0 0 0 0 0 0 0 1 1 1 1\\
\end{bNiceMatrix},~
G_{13, 10,2}^{2}=
\begin{bNiceMatrix}
1 0 0 0 0 0 0 0 0 0 1 0 0\\
0 1 0 0 0 0 0 0 0 0 0 1 0\\
0 0 1 0 0 0 0 0 0 0 0 0 1\\
0 0 0 1 0 0 0 0 0 0 0 0 1\\
0 0 0 0 1 0 0 0 0 0 0 0 1\\
0 0 0 0 0 1 0 0 0 0 0 0 1\\
0 0 0 0 0 0 1 0 0 0 0 0 1\\
0 0 0 0 0 0 0 1 0 0 0 0 1\\
0 0 0 0 0 0 0 0 1 0 0 0 1\\
0 0 0 0 0 0 0 0 0 1 0 0 1\\
\end{bNiceMatrix}
\]

\setlength{\tabcolsep}{1pt}

\begin{table}[h]
{\begin{center}
{

\begin{tabular}{|c||c|c|c|c|c|c|c|c|c|c|c|}
\hline
$n/k$ & 2 & 3 & 4 & 5 & 6 & 7 & 8 & 9 & 10 & 11 \tabularnewline
\hline
\hline
2 &  0 &  &  &  &  &  &  &  &  &   \tabularnewline
\hline
3 &  0 & 0 &  &  &  &  &  &  &  &   \tabularnewline
\hline
4 &  2 & 0 & 0 &  &  &  &  &  &  &   \tabularnewline
\hline
5 &  2 & 1 & 0 & 0 &  &  &  &  &  &   \tabularnewline
\hline
6 &  4 & 3 & 2 & 0 & 0 &  &  &  &  &  \tabularnewline
\hline
7 &  4 & 3 & 2 & 1 & 0 & 0 &  &  &  &   \tabularnewline
\hline
8 &  4 & 4 & 3 & 2 & 2 & 0 & 0 &  &  &  \tabularnewline
\hline
9 &  4 & 4 & 4 & 3 & 2 & 1 & 0 & 0 &  &   \tabularnewline
\hline
10 &  6 & 4 & 4 &  3&  3&  2&  2&  0& 0 &  \tabularnewline
\hline
11 &  6 & 5 &  4&  4& 4 & 3 & 2 & 1 & 0 & 0 \tabularnewline
\hline
12 & $8^o$ & 5 & $6^o$ & $4^{III,o}$ & $4^{III,o}$ & $3^{III}$ & $3^{III,o}$ & $2^{III,o}$ & $2^{III,o}$ & 0  \tabularnewline
\hline
13 & $8^{o}$ & $7^{III,o}$ & $6^{III,o}$ & $5^{III,o}$  & $4^{III,o}$  & $4^{I,h_1,o}$  & $4^{III,o}$ & $3^o$  & $2^o$ & 1  \tabularnewline
\hline
\end{tabular}}
\caption {Each cell refers to the highest minimum distance $d(n,k)$ for $n \leq 13$ when $h=2$, and
examples of corresponding generator matrices $G_{12, k, d}^2~(3 \le k \le 10)$ and $G_{13, k, d}^2~ (3 \le k \le 10)$
}
\end{center}
}
\end{table}

\setlength{\tabcolsep}{1pt}

\begin{table}[h]
{\begin{center}
{

\begin{tabular}{|c||c|c|c|c|c|c|c|c|c|c|c|}
\hline
$n/k$ & 0 & 1 & 2 & 3 & 4 & 5 & 6 & 7 & 8 & 9 \tabularnewline
\hline
\hline
4 &  (2;0) &  &  &  &  &  &  &  &  &   \tabularnewline
\hline
5 &  (2;1) & (1;0) &  &  &  &  &  &  &  &   \tabularnewline
\hline
6 &  (4;2) & (3;1) & (2;0) &  &  &  &  &  &  &  \tabularnewline
\hline
7 &  (4;3) & (3;2) & (2;1) & (1;0) &  &  &  &  &  &   \tabularnewline
\hline
8 &  (4;4) & (4;3) & (3;2) & (2;1) & (2;0) &  &  &  &  &  \tabularnewline
\hline
9 &  (4;5) & (4;4) & (4;3) & (3;2) & (2;1) & (1;0) &  &  &  &   \tabularnewline
\hline
10 &  (6;6) & (4;5) & (4;4) &  (3;3)&  (3;2)&  (2;1)&  (2;0)&  0& 0 &  \tabularnewline
\hline
11 &  (6;7) & (5;6) &  (4;5)&  (4;4)& (4;3) & (3;2) & (2;1) & (1;0) &  &  \tabularnewline
\hline
12 & $(8;8)$ & (5;7) & $(6;6)$ & $(4;5)$ & $(4;4)$ & $(3;3)$ & $(3;2)$ & $(2;1)$ & $(2;0)$ &   \tabularnewline
\hline
13 & $(8;9)$ & $(7;8)$ & $(6;7)$ & $(5;6)$  & $(4;5)$  & $(4;4)$  & $(4;3)$ & $(3;2)$  & $(2;1)$ & (1;0)  \tabularnewline
\hline
\end{tabular}}
\caption {$[[n, k, d; c]]_2$  EAQECC with $(d;c)$  for $n \leq 13$ when $h=2$ based on Proposition 1 and
Table 3}

\end{center}
}
\end{table}

\begin{example}
{\rm
Fix the hull dimension $h=3$.

Since $h=3$, the code length $n$ should be at least $6$. If $n-k \le 2$, then there does not exist an $[n, k]$ code with $h=3$. If $k=3$, we use the optimal minimum distances of self-orthogonal $[n,3]$ codes from~\cite{Bou}.

For any $n$ with $k$ such that $3 \le k \le n \le 11$ and $n=12$ with $k=3, 4$,  we ran exhaustive search to obtain optimal or $h_3$-optimal codes.

Using Construction III, we construct an optimal $[12, 5, 4]$ code with $h=3$ from a $[10, 4, 4]$ code, and an $h_3$-optimal $[12, 6, 4]$ code from a $[10, 5, 3]$ code with $h=2$. We further construct an optimal $[12, 7,4]$ code from a $[10, 6, 3]$ code with $h=1$ by Construction III.

  We construct $h_3$-optimal $[12, 8, 2]$ and $[12, 9, 2]$ codes from $[10, 7, 2]$ and $[10, 8,2]$ codes with $h=2$, respectively by Construction I. This is justified by exhaustive search that there are no $[12, 6, 4]$, $[12, 8, 3]$ codes.

For $n=13$, we construct $[13, 4,4]$, $[13, 5,4]$, $[13, 6,4]$, $[13, 7, 3]$, $[13, 8, 2]$, $[13, 9, 2]$ codes with $h=3$ from $[11, k]$ codes with $h=3$ $(3 \le k \le 8)$ by Construction IV. Similarly we construct
$[13, 4,5]$, $[13, 5,4]$, $[13, 6,4]$, $[13, 7, 4]$, $[13, 8, 3]$, $[13, 10, 2]$ codes with $h=3$ from $[11, k]$ codes with $h=2$ $(3 \le k \le 7, k=9)$ by Construction I.

}
\end{example}

$\bullet$ $n=12$ with $h=3$

\[
G_{12, 4,4}^{3}=
\begin{bNiceMatrix}
1 0 0 0 0 1 1 1 1 0 0 0\\
0 1 0 0 1 0 1 1 0 0 0 0\\
0 0 1 0 1 1 0 1 0 0 0 0\\
0 0 0 1 1 1 1 0 0 0 0 0\\
\end{bNiceMatrix},~
G_{12, 5,4}^{3}=
 \begin{bNiceMatrix}
1 0 0 1 0 0 1 0 1 1 0 0\\
0 1 0 1 0 0 0 1 1 1 1 0\\
0 0 1 1 0 0 1 1 0 0 1 0\\
0 0 0 0 1 0 1 1 0 1 0 0\\
0 0 0 0 0 1 1 1 1 0 0 0\\
\end{bNiceMatrix}
\]

\[
G_{12, 6,4}^{3}=
\begin{bNiceMatrix}
1 1 1 1 0 0 0 0 0 0 0 0\\
1 0 1 0 0 0 0 1 1 0 0 1\\
1 0 0 1 0 0 0 0 1 1 1 0\\
0 0 0 0 1 0 0 1 0 1 1 0\\
0 0 0 0 0 1 0 1 1 0 1 0\\
0 0 0 0 0 0 1 1 1 1 0 0\\
\end{bNiceMatrix},~
G_{12, 7,4}^{3}=
\begin{bNiceMatrix}
1 1 1 1 1 1 1 1 1 1 1 1\\
1 0 1 0 0 0 0 0 0 1 0 1\\
1 0 0 1 0 0 0 0 1 0 0 1\\
0 0 0 0 1 0 0 0 1 1 1 0\\
1 0 0 0 0 1 0 0 0 1 1 0\\
1 0 0 0 0 0 1 0 1 0 1 0\\
1 0 0 0 0 0 0 1 1 1 0 0\\
\end{bNiceMatrix}
\]

\[
G_{12, 8,2}^{3}=
\begin{bNiceMatrix}
1 1 1 1 0 0 0 0 0 0 0 0\\
1 0 1 0 0 0 0 0 0 0 0 1\\
1 0 0 1 0 0 0 0 0 0 1 0\\
0 0 0 0 1 0 0 0 0 1 0 0\\
0 0 0 0 0 1 0 0 0 1 0 0\\
0 0 0 0 0 0 1 0 0 1 0 0\\
0 0 0 0 0 0 0 1 0 1 0 0\\
0 0 0 0 0 0 0 0 1 1 0 0\\
\end{bNiceMatrix},~
G_{12, 9,2}^{3}=
\begin{bNiceMatrix}
1 1 0 0 0 0 0 0 0 0 0 0\\
0 0 1 0 0 0 0 0 0 0 0 1\\
0 0 0 1 0 0 0 0 0 0 1 0\\
0 0 0 0 1 0 0 0 0 0 1 0\\
0 0 0 0 0 1 0 0 0 0 1 0\\
0 0 0 0 0 0 1 0 0 0 1 0\\
0 0 0 0 0 0 0 1 0 0 1 0\\
0 0 0 0 0 0 0 0 1 0 1 0\\
0 0 0 0 0 0 0 0 0 1 1 0\\
\end{bNiceMatrix}
\]

$\bullet$ $n=13$ with $h=3$

\[
G_{13, 4,5}^{3}=
\begin{bNiceMatrix}
1 0 1 1 1 1 1 0 0 0 0 0 0\\
0 0 1 0 0 0 1 1 0 1 1 0 1\\
0 0 0 1 0 1 0 0 0 1 1 1 0\\
1 1 0 0 1 1 1 1 1 0 0 0 0\\
\end{bNiceMatrix},~
G_{13, 5,4}^{3}=
\begin{bNiceMatrix}
1 0 1 1 1 0 0 0 0 0 0 0 0\\
1 1 1 0 0 0 1 1 0 0 1 0 0\\
1 1 0 1 0 0 1 0 1 1 0 0 0\\
1 1 0 0 1 0 1 1 0 1 0 0 0\\
0 0 0 0 0 1 1 1 1 0 0 0 0\\
\end{bNiceMatrix}
\]
\[
G_{13, 6,4}^{3}=
\begin{bNiceMatrix}
1 0 0 1 1 1 0 0 0 0 0 0 0\\
0 0 1 0 0 0 0 0 1 1 1 0 1\\
1 1 0 1 0 0 0 1 1 0 0 1 0\\
1 1 0 0 1 0 0 1 0 1 1 0 0\\
1 1 0 0 0 1 0 1 1 0 1 0 0\\
0 0 0 0 0 0 1 1 1 1 0 0 0\\
\end{bNiceMatrix},~
G_{13, 7,4}^{3}=
\begin{bNiceMatrix}
1 0 1 0 1 1 0 0 0 0 0 0 0\\
1 1 1 0 0 0 0 0 1 0 1 0 1\\
0 0 0 1 0 0 0 0 1 1 0 0 1\\
1 1 0 0 1 0 0 0 0 1 1 1 0\\
1 1 0 0 0 1 0 0 1 0 1 1 0\\
0 0 0 0 0 0 1 0 1 1 0 1 0\\
0 0 0 0 0 0 0 1 1 1 1 0 0\\
\end{bNiceMatrix}
\]
\[
G_{13, 8,3}^{3}=
\begin{bNiceMatrix}
1 0 1 0 0 0 0 1 1 0 0 0 0\\
1 1 1 0 0 0 0 0 0 1 1 0 1\\
0 0 0 1 0 0 0 0 0 0 1 0 1\\
0 0 0 0 1 0 0 0 0 1 0 0 1\\
0 0 0 0 0 1 0 0 0 1 1 1 0\\
0 0 0 0 0 0 1 0 0 0 1 1 0\\
1 1 0 0 0 0 0 1 0 1 0 1 0\\
1 1 0 0 0 0 0 0 1 1 1 0 0\\
\end{bNiceMatrix},~
G_{13, 9,2}^{3}=
\begin{bNiceMatrix}
1 0 1 1 0 0 0 0 0 0 0 0 0\\
1 1 1 0 0 0 0 0 0 0 1 1 1\\
1 1 0 1 0 0 0 0 0 0 0 1 1\\
0 0 0 0 1 0 0 0 0 0 1 0 1\\
0 0 0 0 0 1 0 0 0 0 1 1 0\\
0 0 0 0 0 0 1 0 0 0 1 0 0\\
0 0 0 0 0 0 0 1 0 0 1 0 0\\
0 0 0 0 0 0 0 0 1 0 1 0 0\\
0 0 0 0 0 0 0 0 0 1 1 0 0\\
\end{bNiceMatrix}
\]

\[
G_{13, 10,2}^{3}=
\begin{bNiceMatrix}
1 0 1 1 1 1 1 1 1 1 1 0 0\\
1 1 1 0 0 0 0 0 0 0 0 0 1\\
1 1 0 1 0 0 0 0 0 0 0 1 0\\
1 1 0 0 1 0 0 0 0 0 0 0 0\\
1 1 0 0 0 1 0 0 0 0 0 0 0\\
1 1 0 0 0 0 1 0 0 0 0 0 0\\
1 1 0 0 0 0 0 1 0 0 0 0 0\\
1 1 0 0 0 0 0 0 1 0 0 0 0\\
1 1 0 0 0 0 0 0 0 1 0 0 0\\
1 1 0 0 0 0 0 0 0 0 1 0 0\\
\end{bNiceMatrix}
\]

\setlength{\tabcolsep}{1pt}

\begin{table}[h]
{\begin{center}
{

\begin{tabular}{|c||c|c|c|c|c|c|c|c|}
\hline
$n/k$ & 3 & 4 & 5 & 6 & 7 & 8 & 9 & 10  \tabularnewline
\hline
\hline
6 & 2 & 0 & 0 & 0 &  &  &  &    \tabularnewline
\hline
7 & 4 & 3 & 0 & 0 & 0  &  &  &  \tabularnewline
\hline
8 & 4 & 3 & 2 & 0 & 0 & 0 &  & \tabularnewline
\hline
9 & 4 & 4 & 3 & 2 & 0 & 0 &  & \tabularnewline
\hline
10 & 4 & 4 & 4 & 2 &  2&  0&  0  & \tabularnewline
\hline
11 & 4 & 4 & 4 &  3& 2 & 2 & 0  &  \tabularnewline
\hline
12 & $6^o$ & 4 & $4^{III,o}$ & $4^{III,o}$  & $4^{III, h_1,o}$ & $2^{III}$ & $2^{II,o}$ & \tabularnewline
\hline
13 & 6 & $\ge 5^{I}$ & $\ge 4^{I,IV}$ & $4^{I,IV, o}$ & $4^{I,o}$ & $\ge 3^{I}$ & $\ge 2^{IV}$ & $2^{I,o}$ \tabularnewline
\hline
\end{tabular}}
\caption {Each cell refers to the highest minimum distance $d(n,k)$ for $n \leq13$ when $h=3$, and
examples of corresponding generator matrices $G_{12, k, d}^3~(4 \le k \le 9)$ and $G_{13, k, d}^3~ (4 \le k \le 10)$}
\end{center}
}
\end{table}

\setlength{\tabcolsep}{1pt}

\begin{table}[h]
{\begin{center}
{

\begin{tabular}{|c||c|c|c|c|c|c|c|c|}
\hline
$n/k$ & 0 & 1 & 2 & 3 & 4 & 5 & 6 & 7  \tabularnewline
\hline
\hline
6 & (2;0) &  &  &  &  &  &  &    \tabularnewline
\hline
7 & (4;1) & (3;0) &  &  &   &  &  &  \tabularnewline
\hline
8 & (4;2) & (3;1) & (2;0) &  &  &  &  & \tabularnewline
\hline
9 & (4;3) & (4;2) & (3;1) & (2;0) &  &  &  & \tabularnewline
\hline
10 & (4;4) & (4;3) & (4;2) & (2;1) &  (2;0)&  &    & \tabularnewline
\hline
11 & (4;5) & (4;4) & (4;3) &  (3;2)& (2;1) & (2;0) &   &  \tabularnewline
\hline
12 & $(6;6)$ & (4;5) & $(4;4)$ & $(4;3)$  & $(4;2)$ & $(2;1)$ & $(2;0)$ & \tabularnewline
\hline
13 & (6;7) & $(\ge 5;6)$ & $(\ge 4;5)$ & $(4;4)$ & $(4;3)$ & $(\ge 3;2)$ & $(\ge 2;1)$ & $(2;0)$ \tabularnewline
\hline
\end{tabular}}
\caption {$[[n, k, d; c]]_2$  EAQECC with $(d;c)$  for $n \leq 13$ when $h=3$ based on Proposition 1 and
Table 5}
\end{center}
}
\end{table}

\begin{example}
{\rm
Fix the hull dimension $h=4$.

Since $h=4$, the code length $n$ should be at least $8$. If $n-k \le 3$, then there does not exist a $[n, k]$ code with $h=4$. If $k=4$, we use the optimal minimum distances of self-orthogonal $[n,4]$ codes from~\cite{Bou}.

For any $n$ with $k$ such that $4 \le k \le n \le 11$ and $n=12$ with $k=4$,  we ran exhaustive search to obtain optimal or $h_4$-optimal codes.

We construct optimal $[12, 5, 4]$ and $[12, 6,4]$ codes with $h=4$ from $[10, 4, 4]$ and  $[10, 5, 3]$ codes with $h=3$ respectively by Construction I.

We construct $h_4$-optimal $[12, 7, 3]$, $[12, 8, 2]$ codes with $h=4$ from $[10, 6, 1]$, $[10, 7, 2]$ codes respectively with $h=3$ by Construction I. This is justified by exhaustive checking that
there are no $[12, 7, 4]$ and $[12, 8, 3]$ codes with $h=4$.

For $n=13$, we obtain $[13, 5,4]$, $[13, 6,4]$, $[13, 7, 3]$, $[13, 8, 2]$ codes with $h=4$ from $[11, k]$ codes with $h=4$ $(4 \le k \le 7)$ by Construction IV.
Similarly, we construct $[13, 4,4]$, $[13, 5,4]$, $[13, 6, 4]$, $[13, 7, 3]$, $[13, 8,2]$, $[13, 9,2]$ codes with $h=4$ from $[11, k]$ codes with $h=3$ $(3 \le k \le 8)$ by Construction I.

}
\end{example}

$\bullet$ $n=12$ with $h=4$

\[
G_{12, 5,4}^{4}=
\begin{bNiceMatrix}
1 0 1 0 0 0 0 0 0 0 1 1\\
0 0 1 0 0 0 0 1 1 1 1 0\\
0 0 0 1 0 0 1 0 1 1 0 0\\
0 0 0 0 1 0 1 1 0 1 0 0\\
0 0 0 0 0 1 1 1 1 0 0 0\\
\end{bNiceMatrix},~
G_{12, 6,4}^{4}=
\begin{bNiceMatrix}
1 0 1 0 1 1 0 0 0 0 0 0\\
1 1 1 0 0 0 0 1 1 0 0 1\\
0 0 0 1 0 0 0 0 1 1 1 0\\
1 1 0 0 1 0 0 1 0 1 1 0\\
1 1 0 0 0 1 0 1 1 0 1 0\\
0 0 0 0 0 0 1 1 1 1 0 0\\
\end{bNiceMatrix}
\]
\[
G_{12, 7,3}^{4}=
\begin{bNiceMatrix}
1 0 0 0 0 1 1 1 1 0 0 1 \\
1 1 0 0 0 1 1 1 0 1 0 0 \\
0 0 1 0 0 1 1 1 0 1 0 1 \\
0 0 0 1 0 0 1 1 0 0 1 0 \\
0 0 0 0 1 1 1 0 0 0 1 0 \\
0 0 0 0 0 1 0 1 1 1 1 1 \\
0 0 0 0 0 0 1 0 0 0 1 1 \\
\end{bNiceMatrix},~
G_{12, 8,2}^{4}=
\begin{bNiceMatrix}
1 0 1 0 0 0 0 0 0 0 0 0\\
1 1 1 0 0 0 0 0 0 0 0 1\\
0 0 0 1 0 0 0 0 0 0 1 0\\
0 0 0 0 1 0 0 0 0 1 0 0\\
0 0 0 0 0 1 0 0 0 1 0 0\\
0 0 0 0 0 0 1 0 0 1 0 0\\
0 0 0 0 0 0 0 1 0 1 0 0\\
0 0 0 0 0 0 0 0 1 1 0 0\\
\end{bNiceMatrix}
\]

$\bullet$ $n=13$ with $h=4$

\[
G_{13, 5,4}^{4}=
\begin{bNiceMatrix}
0 0 1 0 0 0 0 1 1 1 1 0 0\\
0 0 0 1 0 0 1 0 1 1 0 0 0\\
0 0 0 0 1 0 1 1 0 1 0 0 0\\
0 0 0 0 0 1 1 1 1 0 0 0 0\\
\end{bNiceMatrix},~
G_{13, 6,4}^{4}=
\begin{bNiceMatrix}
1 0 1 0 1 1 0 0 0 0 0 0 0\\
1 1 1 0 0 0 0 1 1 0 0 1 0\\
0 0 0 1 0 0 0 0 1 1 1 0 0\\
1 1 0 0 1 0 0 1 0 1 1 0 0\\
1 1 0 0 0 1 0 1 1 0 1 0 0\\
0 0 0 0 0 0 1 1 1 1 0 0 0\\
\end{bNiceMatrix}
\]
\[
G_{13, 7,3}^{4}=
\begin{bNiceMatrix}
1 0 1 1 1 0 0 0 0 0 0 0 0\\
1 1 1 0 0 0 0 0 1 0 0 0 1\\
1 1 0 1 0 0 0 0 1 0 0 1 0\\
1 1 0 0 1 0 0 0 1 1 1 0 0\\
0 0 0 0 0 1 0 0 0 1 1 0 0\\
0 0 0 0 0 0 1 0 1 0 1 0 0\\
0 0 0 0 0 0 0 1 1 1 0 0 0\\
\end{bNiceMatrix},~
G_{13, 8,2}^{4}=
\begin{bNiceMatrix}
1 0 1 0 0 0 0 0 0 0 0 0 0\\
1 1 1 0 0 0 0 0 0 0 0 1 0\\
0 0 0 1 0 0 0 0 0 0 1 0 0\\
0 0 0 0 1 0 0 0 0 1 0 0 0\\
0 0 0 0 0 1 0 0 0 1 0 0 0\\
0 0 0 0 0 0 1 0 0 1 0 0 0\\
0 0 0 0 0 0 0 1 0 1 0 0 0\\
0 0 0 0 0 0 0 0 1 1 0 0 0\\
\end{bNiceMatrix}
\]

\[
G_{13, 9,2}^{4}=
\begin{bNiceMatrix}
1 0 1 0 0 0 0 0 0 0 0 0 0\\
1 1 1 0 0 0 0 0 0 0 1 1 1\\
0 0 0 1 0 0 0 0 0 0 0 1 1\\
0 0 0 0 1 0 0 0 0 0 1 0 1\\
0 0 0 0 0 1 0 0 0 0 1 1 0\\
0 0 0 0 0 0 1 0 0 0 1 0 0\\
0 0 0 0 0 0 0 1 0 0 1 0 0\\
0 0 0 0 0 0 0 0 1 0 1 0 0\\
0 0 0 0 0 0 0 0 0 1 1 0 0\\
\end{bNiceMatrix}
\]

\setlength{\tabcolsep}{1pt}

\begin{table}[h]
{\begin{center}
{
\begin{tabular}{|c||c|c|c|c|c|c|}
\hline
$n/k$ & 4 & 5 & 6 & 7 & 8 & 9  \tabularnewline
\hline
\hline
8 & 4 & 0 & 0 & 0 & 0  & 0   \tabularnewline
\hline
9 & 4 & 2 & 0 & 0 & 0 & 0 \tabularnewline
\hline
10 & 4 & 4 & 2 & 0 &  0  & 0    \tabularnewline
\hline
11 & 4 & 4 & 3 & 2 & 0 & 0  \tabularnewline
\hline
12 & 4 & $4^{I,o}$ & $4^{I,o}$ & $3^I$ & $2^I$  & 0  \tabularnewline
\hline
13 & 4 & $\ge 4^{I, IV}$ & $4^{I, IV, o}$& $\ge 3^{I, IV}$ &
 $\ge 2^{I, IV}$ & $\ge 2^{I}$  \tabularnewline
 \hline
\end{tabular}
\caption {Each cell refers to the highest minimum distance $d(n,k)$  for $n \le 13$ when $h=4$, and
examples of corresponding generator matrices $G_{12, k, d}^4~(5 \le k \le 8)$ and $G_{13, k, d}^4~ (5 \le k \le 9)$}
}
\end{center}
}
\end{table}

\setlength{\tabcolsep}{1pt}

\begin{table}[h]
{\begin{center}
{
\begin{tabular}{|c||c|c|c|c|c|c|}
\hline
$n/k$ & 0 & 1 & 2 & 3 & 4 & 5  \tabularnewline
\hline
\hline
8 & (4;0) &  &  &  &   &    \tabularnewline
\hline
9 & (4;1) & (2;0) &  &  &  &  \tabularnewline
\hline
10 & (4;2) & (4;1) & (2;0) &  &    &     \tabularnewline
\hline
11 & (4;3) & (4;2) & (3;1) & (2;0) &  &   \tabularnewline
\hline
12 & (4;4) & $(4;3)$ & $(4;2)$ & $(3;1)$ & $(2;0)$  &   \tabularnewline
\hline
13 & (4;5) & $(\ge 4;4)$ & $(4;3)$& $(\ge 3;2)$ &
 $(\ge 2;1)$ & $(\ge 2;0)$  \tabularnewline
 \hline
\end{tabular}
\caption {$[[n, k, d; c]]_2$  EAQECC with $(d;c)$  for $n \leq 13$ when $h=4$ based on Proposition 1 and
Table 7}
}
\end{center}
}
\end{table}

\begin{example}
{\rm
Fix the hull dimension $h=5$.

Since $h=5$, the code length $n$ should be at least $10$. If $n-k \le 4$, then there does not exist a $[n, k]$ code with $h=5$. If $k=5$, we use the optimal minimum distances of self-orthogonal $[n,5]$ codes from~\cite{Bou}.

For $n=10, 11$ with $k=4,5$, we ran exhaustive search to obtain optimal or $h_5$-optimal codes.

It is well known that there is a self-orthogonal $[12, 5, 4]$ code~\cite{Bou}, which is optimal.
We construct $h_5$-optimal $[12, 6, 3]$ and $[12, 7, 3]$ codes from $[10, 5, 4]$ and $[10, 6, 2]$ codes with $h=4$ by Construction I. This is justified by exhaustive checking that there are no $[12, 6, 4]$ and $[12, 7, 4]$ codes with $h=5$.

For $n=13$, we obtain $[13, 6,3]$, $[13, 7,3]$ codes with $h=5$ from $[11, k]$ codes with $h=5$ $(5 \le k \le 6)$ by Construction IV. Similarly we construct $[13, 6, 4]$, $[13, 7, 3]$, $[13, 8, 2]$ codes from $[11, k]$ codes with $h=4$ $(5 \le k \le 7)$ by Construction I.

Although we cannot construct a $[13, 7, 4]$ code with $h=5$ from Constructions I and IV, we observe the following. Using the unique self-dual $[12, 6, 4]$ code $B_{12}$~\cite{Ple} with the below generator matrix $G_{12}$, we obtain an optimal $[13, 7, 4]$ code $C_{13,7,4}$ with the below generator matrix $G_{13,7,4}$ by augmenting a coset leader ${\bf v}= (000000010101)$ to $B_{12}$ because the covering radius of $B_{12}$ is 3. We show that $h(C_{13,7,4})=5$ in what follows. The top row ${\bf r}_1= (1~|~{\bf v})$ of $G_{13,7,4}$ is orthogonal to only five rows ${\bf r}_2, {\bf r}_3, {\bf r}_5, {\bf r}_6, {\bf r}_7$ of $G_{13}$. Therefore, the hull of $C_{13,7,4}$ consists of these five rows, resulting in $h(C_{13,7,4})=5$.

\[
G_{12}=
\begin{bNiceMatrix}
111100000000 \\
001111000000 \\
000011110000 \\
000000111100 \\
000000001111 \\
010101010101 \\
\end{bNiceMatrix}
,\qquad
G_{13,7,4}=
\left[\begin{array}{c|c}
1&000000010101 \\
\hline
0&111100000000 \\
0&001111000000 \\
0&000011110000 \\
0&000000111100 \\
0&000000001111 \\
0&010101010101 \\
\end{array}\right]
\]

$\bullet$ $n=12$ with $h=5$

\[
G_{12, 5,4}^{5}=
\begin{bNiceMatrix}
1 0 1 0 0 0 0 0 0 1 1 0\\
0 0 1 0 0 0 0 1 1 1 0 0\\
1 1 0 1 0 0 1 0 1 1 0 0\\
1 1 0 0 1 0 1 1 0 1 0 0\\
0 0 0 0 0 1 1 1 1 0 0 0\\
\end{bNiceMatrix},~
G_{12, 6,3}^{5}=
\begin{bNiceMatrix}
1 0 1 0 1 1 0 0 0 0 0 0\\
1 1 1 0 0 0 0 0 1 1 1 1\\
0 0 0 1 0 0 0 1 0 1 1 1\\
1 1 0 0 1 0 0 1 1 0 0 1\\
1 1 0 0 0 1 0 1 1 0 1 0\\
0 0 0 0 0 0 1 1 1 1 0 0\\
\end{bNiceMatrix}
\]
\[
G_{12, 7,3}^{5}=
\begin{bNiceMatrix}
1 0 0 0 0 1 0 1 1 1 0 1\\
1 1 0 0 0 0 0 0 0 0 1 1\\
0 0 1 0 0 1 1 1 0 0 0 0\\
0 0 0 1 0 0 0 0 0 1 1 1\\
0 0 0 0 1 0 1 1 1 0 0 0\\
0 0 0 0 0 1 0 1 1 0 1 1\\
0 0 0 0 0 0 1 1 0 0 1 1\\
\end{bNiceMatrix}
\]

$\bullet$ $n=13$ with $h=5$

\[
G_{13, 5,4}^{5}=
\begin{bNiceMatrix}
1 0 0 0 0 0 0 1 1 0 1 0 0\\
0 1 0 0 1 0 1 0 1 1 1 0 0\\
0 0 1 0 0 0 0 1 1 1 0 0 0\\
0 0 0 1 1 0 0 1 1 0 0 0 0\\
0 0 0 0 0 1 1 1 1 0 0 0 0\\
\end{bNiceMatrix},~
G_{13, 6,4}^{5}=
\begin{bNiceMatrix}
1 0 1 1 1 0 0 0 0 0 0 0 0\\
1 1 1 0 0 0 0 0 1 1 1 0 1\\
1 1 0 1 0 0 0 0 1 1 1 1 0\\
1 1 0 0 1 0 0 1 0 1 1 0 0\\
0 0 0 0 0 1 0 1 1 0 1 0 0\\
0 0 0 0 0 0 1 1 1 1 0 0 0\\
\end{bNiceMatrix}
\]
\[
G_{13, 7,4}^{5}=
\begin{bNiceMatrix}
1000000010101\\
0111100000000\\
0001111000000\\
0000011110000\\
0000000111100\\
0000000001111\\
0010101010101\\
\end{bNiceMatrix},~
G_{13, 8,2}^{5}=
\begin{bNiceMatrix}
1 0 1 0 0 0 0 0 0 0 0 0 0\\
1 1 1 0 0 0 0 0 0 0 1 1 1\\
0 0 0 1 0 0 0 0 0 0 0 1 1\\
0 0 0 0 1 0 0 0 0 0 1 0 1\\
0 0 0 0 0 1 0 0 0 0 1 1 0\\
0 0 0 0 0 0 1 0 0 1 0 0 0\\
0 0 0 0 0 0 0 1 0 1 0 0 0\\
0 0 0 0 0 0 0 0 1 1 0 0 0\\
\end{bNiceMatrix}
\]

}
\end{example}


\begin{table}[h]
{\begin{center}
{
\begin{tabular}{|c||c|c|c|c|}
\hline
$n/k$  & 5 & 6 & 7 & 8 \tabularnewline
\hline
\hline
10 & 2 & 0 & 0 & 0     \tabularnewline
\hline
11 & 4 & 3 & 0 & 0     \tabularnewline
\hline
12 & $4^o$ & $3^I$ & $3^I$ & 0 \tabularnewline
\hline
13 & 4 & $4^{I,o}$ & $4^o$ & $\ge 2^I$ \tabularnewline
\hline
\end{tabular}
\caption {Each cell refers to the highest minimum distance $d(n,k)$ for $n \leq 13$ when $h=5$, and
examples of corresponding generator matrices $G_{12, k, d}^5~(5 \le k \le 7)$ and $G_{13, k, d}^5~ (5 \le k \le 8)$
}
}
\end{center}
}
\end{table}
\medskip

\begin{table}[h]
{\begin{center}
{
\begin{tabular}{|c||c|c|c|c|}
\hline
$n/k$  & 0 & 1 & 2 & 3 \tabularnewline
\hline
\hline
10 & (2;0) &  &  &      \tabularnewline
\hline
11 & (4;1) & (3;0) &  &      \tabularnewline
\hline
12 & $(4;2)$ & $(3;1)$ & $(3;0)$ &  \tabularnewline
\hline
13 & (4;3) & $(4;2)$ & $(4;1)$ & $(\ge 2;0)$ \tabularnewline
\hline
\end{tabular}
\caption {$[[n, k, d; c]]_2$  EAQECC with $(d;c)$  for $n \leq 13$ when $h=5$ based on Proposition 1 and
Table 9}
}
\end{center}
}
\end{table}

There are not many known $[[n, k, d; c]]$ EAQECC when $n \le 13$.
We compare our results with some known EAQECC in Table 11. In fact, the parameters in boldface in the third column of the table are better than the currently known parameters from~\cite{NguKim},~\cite{SokQia}.

\renewcommand{\arraystretch}{2}

\begin{table}[h] \label{tab-EAQECC-sum}
{\begin{center}
{
\begin{tabular}{c|c|c|c}
\hline
\hline
currently known EAQECC & Ref & our related EAQECC & Tables\tabularnewline
\hline
$[[9,1,3;1]]_2$  & \cite{NguKim} & $[[{\bf 9,2,3;1}]]_2$ & Table 6 \tabularnewline
$[[12,1,7;9]]_2$ &\cite{SokQia} &  $[[12,1,7;9]]_2$ & Table 2 \tabularnewline
$[[12,3,5;7]]_2$ &\cite{SokQia} &  $[[12,3,5;7]]_2$ & Table 2 \tabularnewline
$[[12,4,4;6]]_2$ &\cite{SokQia} & $[[12,4,4;6]]_2$, $[[12,2,6;6]]_2$ &  Table 2, Table 4   \tabularnewline
$[[12,5,3;5]]_2$ &\cite{SokQia} & $[[{\bf 12,5,4;5}]]_2$,
$[[12,3,4;5]]_2$ &  Table 2, Table 4    \tabularnewline
$[[13,7,3;4]]_2$ &\cite{SokQia} &  $[[13,7,3;4]]_2$,
$[[13,5,4;4]]_2$ &  Table 2, Table 4    \tabularnewline
$[[13,3,5;8]]_2$ &\cite{SokQia} &  $[[{\bf 13,3,6;8}]]_2$,
$[[13,1,7;8]]_2$ &  Table 2, Table 4     \tabularnewline
\hline
\hline
\end{tabular}}
\caption {Comparison with some known EAQECC}
\end{center}
}
\end{table}

\section{Conclusion}
This paper has introduced a systematic and efficient method to construct binary optimal or possibly optimal $[n, k]$ codes of lengths up to 13 with respect to hull dimensions 1-5. These codes are used to construct EAQECC with the best known parameters.

The complexity of Constructions I-IV mainly depends on the binary vectors ${\bf x}$ of length $n$, whose cardinality is at most $2^{n-1}$ due to the parity of ${\bf x}$. This complexity can be reduced if we consider the standard generator matrix $G$ in Theorems 1, 2, and 3. Since $n \le 13$ we need at most $2^{12}=4,096$ vectors for ${\bf x}$. As we prefer to keep a non-standard generator matrix to distinguish Constructions I-IV, we have run all possibilities for ${\bf x}$ and have checked the equivalence of codes by Magma. Using our linux machine {\tt Intel(R) Xeon(R) CPU E3-1225 V2 @ 3.20GHz}, calculations for Theorems 1-3 were performed within ten minutes while some exhaustive search took more than two weeks. As future work, it is worth considering similar constructions for other finite fields and rings.

\medskip

\section*{Acknowledgements}
The author wants to thank Shitao Li for his careful reading.


\section*{Declarations}

\noindent
{\bf Conflict of interest}  The author declares that he has no conflict of interest regarding the publication of
this paper.



\begin{thebibliography}{99}

\bibitem{AssKey} Assmus, Jr. E. F., Key, J. D.: Affine and projective planes. Discrete
Math. {\bf 83}, 161--187 (1990)


\bibitem{Bos} Bosma, W., Cannon J., Playoust C.: The Magma algebra system I: The user language: J. Symbolic Comput. {\bf 24}, 235--265 (1997)


\bibitem{Bou} Bouyukliev I., Bouyuklieva, S., Gulliver T.A., Ostergard, P.R.J.:
Classification of optimal binary self-orthogonal codes. Journal of Combinatorial Mathematics and Combinatorial Computing. {\bf 59}, 33--87 (2006)

\bibitem{CRSS} Calderbank, A. R., Rains, E. M., Shor P. W., Sloane N.J.A.: Quantum error correction
via codes over $\mathbb F_4$. IEEE Trans. Inform. Theory. {\bf 44}(4), 1369--1387 (1998)

\bibitem{CarGui} Carlet C., Guilley S.: Complementary dual codes for counter-measures to side-channel attacks, in Coding Theory and Applications.  CIM Series in Mathematical Sciences, {\bf 3}, Springer International Publishing. 97--105 (2015)

\bibitem{CheZhuKai} X. Chen, S. Zhu, X. Kai, Entan.glement-assisted quantum MDS codes constructed from
constacyclic codes, Quantum Inf. Process. {\bf 17}, 273 (2018)



\bibitem{GalHer} Galindo, C., Hernando, F.,   Matsumoto, R., Ruano, D.: Entanglement-assisted quantum error-correcting
codes over arbitrary finite fields. Quantum Inf. Process. {\bf 18}, 116 (2019)

\bibitem{Gal} Galvez, L., Kim, J.-L., Lee, N., Roe Y.G., Won B.-S.: Some bounds
on binary LCD codes. Cryptogr. Commun. {\bf 10}, 719--728 (2018)

\bibitem{GueJitGul} Guenda, K., Jitman, S., Gulliver T.A.: Constructions of good entanglement-assisted quantum
error correcting codes: Des. Codes Cryptogr. {\bf 86}, 121--136 (2018)


\bibitem{Gra}
 Grassl M.,
{\tt http://www.codetables.de/}, accessed on August 17, 2022.


\bibitem{Har2021} Harada M.: Construction of binary LCD codes, ternary LCD codes and quaternary Hermitian LCD. arXiv, {\tt https://arxiv.org/abs/2101.11821v1}.

\bibitem{HarSai} Harada, M., Saito, K.: Binary linear complementary dual codes. Cryptogr. Commun. {\bf 11}, 677--696 (2019)

\bibitem{HuffmanPless} Huffman W.C., Pless V.: Fundamentals of Error-Correcting Codes. Cambridge University Press, Cambridge, (2003)


\bibitem{KimWeb} Kim, J.-L.: Website,

{\tt https://cicagolab.sogang.ac.kr/cicagolab/2655.html}.

\bibitem{Kim01} Kim, J.-L.: New extremal self-dual codes of lengths 36, 38, and 58. IEEE Trans. Inform. Theory. {\bf 47}(1), 386--393 (2001)

\bibitem{KimLee1} Kim, J.-L., Lee, Y.:
 Euclidean and Hermitian self-dual MDS codes over large finite fields. J. Combin. Theory Ser. A. {\bf 105} (1), 79--95 (2004)

\bibitem{KimLee2} Kim, J.-L., Lee Y.: An efficient construction of self-dual codes.
Bull. Korean Math. Soc. {\bf 52}(3), 915--923 (2015)


\bibitem{LiZeng} Li and, C., Zeng, P.: Constructions of linear codes with one-dimensional hull. IEEE Trans. Inform. Theory. {\bf 65}(3), 1668--1676 (2019)

\bibitem{Massey}
 Massey, J. L.: Reversible codes, Information and Control. {\bf 7}(3), 369--380 (1964)

\bibitem{Massey92}
 Massey, J. L.: Linear codes with complementary duals. Discrete Mathematics. {\bf 106-107}, 337--342 (1992)

\bibitem{NguKim} Nguyen, D. M., Kim, S.: Minimal-entanglement entanglement-assisted
quantum error correction codes from modified
circulant matrices. Symmetry. {\bf 9}, 122 (2017)


\bibitem{Ple} Pless, V.: A classification of self-orthogonal codes over GF(2). Discrete Math. {\bf 3}, 209--246 (1972)

\bibitem{QiaZha} Qian, J., Zhang, L.: Entanglement-assisted quantum codes from arbitrary
binary linear codes. Des. Codes Cryptogr. {\bf 77}, 193--202 (2015)

\bibitem{RaiSlo} Rains, E., Sloane, N.J.A.: Self-dual codes. in: Pless, V. S., Huffman, W. C.
(Eds.), Handbook of Coding Theory, Elsevier, Amsterdam. The Netherlands, (1998)




\bibitem{Sen2000}
Sendrier, N.: Finding the permutation between equivalent linear codes:
The support splitting algorithm. IEEE Trans. Inform. Theory. {\bf 46}(4), 1993--1203 (2000)

\bibitem{SenSim}
Sendrier, N., Simos, D. E.: The Hardness of Code Equivalence over $\mathbb F_q$ and its Application
to Code-based Cryptography. Post-Quantum Cryptography - PQCrypto 2013, Jun 2013, Limoges,
France. 203--216, 10.1007/978-3-642-38616-9.  hal-00863598.




\bibitem{SokQia} Sok, L., Qian, G.: Linear codes with arbitrary dimensional hull and their
applications to EAQECCs. Quantum Information Processing. {\bf 21}, 72 (2022)
\end{thebibliography}
\end{document}